\newcommand{\undersmile}{\mathrel{\lower6pt\hbox{$\smile$}}}
\newcommand{\qh}{{\bf h}}
\newcommand{\qn}{{\bf n}}
\newcommand{\qr}{{\bf r}}
\newcommand{\qu}{{\bf u}}
\newcommand{\qw}{{\bf w}}
\newcommand{\qy}{{\bf y}}
\newcommand{\qB}{{\bf B}}
\newcommand{\qH}{{\bf H}}
\newcommand{\qI}{{\bf I}}
\newcommand{\qR}{{\bf R}}
\newcommand{\qU}{{\bf U}}
\newcommand{\qV}{{\bf V}}
\newcommand{\qW}{{\bf W}}
\newcommand{\qY}{{\bf Y}}
\newcommand{\qZ}{{\bf Z}}
\newcommand{\be}{\begin{equation}} \newcommand{\ee}{\end{equation}}
\newcommand{\bea}{\begin{eqnarray}} \newcommand{\eea}{\end{eqnarray}}
\newtheorem{lemma}{Lemma}
\newtheorem{proposition}{Proposition}
\begin{document}
\title{Beamforming Optimization for Full-Duplex Wireless-powered MIMO Systems}
\author{\IEEEauthorblockN{Batu K. Chalise \IEEEmembership{Senior Member, IEEE}, Himal A. Suraweera \IEEEmembership{Senior Member, IEEE}, Gan Zheng \IEEEmembership{Senior Member, IEEE}, George K. Karagiannidis \IEEEmembership{Fellow, IEEE}}
\thanks{B. K. Chalise is with the Department of Electrical Engineering and Computer Science, Cleveland State University, 2121 Euclid Avenue, OH 44115, USA (e-mail: b.chalise@csuohio.edu).}
\thanks{H. A. Suraweera is with the Department of Electrical and Electronic Engineering, University of Peradeniya, Peradeniya 20400, Sri Lanka (e-mail: himal@ee.pdn.ac.lk).}
\thanks{G. Zheng is with the Wolfson School of Mechanical, Electrical, and Manufacturing Engineering, Loughborough University, Loughborough, LE11 3TU, UK (e-mail: g.zheng@lboro.ac.uk).}
\thanks{G. K. Karagiannidis is with Electrical and Computer Engineering Department, Aristotle University of Thessaloniki, 54636 Thessaloniki, Greece (e-mail: geokarag@auth.gr).}
\thanks{Part of this work was presented in IEEE SPAWC'16, Edinburgh, UK, July 2016.}  
}
\maketitle


\begin{abstract}

We propose techniques for optimizing transmit beamforming in a full-duplex multiple-input-multiple-output (MIMO) wireless-powered communication system, which consists of two phases. In the first phase, the wireless-powered mobile station (MS) harvests energy using signals from the base station (BS), whereas in the second phase, both MS and BS communicate to each other in a full-duplex mode. When complete instantaneous channel state information (CSI)  is available, the BS beamformer and the time-splitting (TS) parameter of energy harvesting are jointly optimized in order to obtain the BS-MS rate region. The joint optimization problem is non-convex, however, a computationally efficient optimum technique, based upon semidefinite relaxation and line-search,  is proposed to solve the problem.  A sub-optimum zero-forcing approach is also proposed, in which a closed-form solution of TS parameter is obtained.  When only second-order statistics of transmit CSI is available, we propose to maximize the ergodic information rate at the MS, while maintaining the outage probability at the BS below a certain threshold. An upper bound for the outage probability is also derived  and an approximate convex optimization framework is proposed for efficiently solving the underlying non-convex  problem. Simulations demonstrate the advantages of the proposed methods over the sub-optimum and half-duplex ones. 

\end{abstract}

\begin{IEEEkeywords}
Full-duplex, wireless power transfer, throughput, outage probability, convex optimization.
\end{IEEEkeywords}

\section{Introduction}
{\large{P}}roliferation of communication devices, systems and networks has  considerably increased the demand for wireless spectrum, driving the interest to design systems with higher spectral efficiency. Most contemporary bi-directional wireless systems have been developed for half-duplex (HD) operation (i.e., either transmit or receive, but not both simultaneously). As an effective method of improving the spectral efficiency of contemporary HD systems, full-duplex (FD) communications have emerged as a promising solution \cite{Sabharwal-JSAC14, Hanzo}. Although the concept of FD is not new and has been in use since 1940s, so far it has been considered as impossible to realize due to the loopback interference (LI) that couples the device output to the input \cite{Choi-MOBICOM10,Riihonen-TWC11}. However, FD is now becoming feasible, thanks to promising analog/digital and spatial domain LI cancellation techniques that can achieve high transmit-receive isolation \cite{Riihonen-TSP11, Duarte-THESIS, Korpi_AntP16,Ngo-JSAC14}. As a result, experimental demonstration of the feasibility of FD has already been carried out by several research laboratories. 

FD communications can be implemented for three basic topologies, namely, (a) relay topology (b) bidirectional topology and (c) base station (BS) topology \cite{Sabharwal-JSAC14}. To this end, bidirectional FD systems have been investigated in some existing works in the literature \cite{Day-TSP12,Cirik_TSP14,Senaratne-ICC11, Ju-TWC11}. These include papers that have focused on information-communication theoretic performance metrics, such as the achievable sum rate and symbol error probability. In \cite{Day-TSP12}, achievable upper and lower sum-rate bounds of multiple antenna bidirectional communication, that use pilot-aided channel estimates for transmit/receive beamforming and interference cancellation, were derived. The beamforming performance of bidirectional multiple-input multiple-output (MIMO) transmission with spatial LI mitigation was investigated in \cite{Senaratne-ICC11}. Furthermore, capacity of a bidirectional MIMO system with spatial correlation was presented in \cite{Ju-TWC11}. Finally, the maximization of the asymptotic ergodic mutual information for a MIMO bi-directional communication system, with imperfect channel state information (CSI), was the focus of the work in \cite{Cirik_TSP14}.

In addition to the spectral efficiency, energy efficiency has gained wide research attention for the design of wireless networks. For example, energy constraints impose an upper limit on the transmit power and the associated signal processing in wireless devices. To this end, a new paradigm that can power communication devices via energy harvesting techniques has emerged \cite{Lu-COM_SURV15,Bi-COMMAG16}. Among different energy harvesting sources such as ambient heat, wind, solar, vibration, etc., wireless power transfer (WPT) using dedicated radio frequency sources is regarded as a promising solution, since it can be controlled to achieve optimum performance. Therefore, WPT can be used to remotely power a variety of applications such as wireless sensor networks, body area networks, wireless charging facilities and future cellular networks \cite{Lu-COM_SURV15,Bi-COMMAG16,HUANG-TWC14}. Moreover, since wireless signals can transport both information and energy, by introducing the new notion of ``simultaneous wireless information and power transfer'' (SWIPT), the rate-energy region of a wireless-powered MIMO broadcast network with an external energy harvester was characterized in \cite{Zhang-TWC13}. In \cite{Huang-TVT16}, throughput performance of a wireless-powered network in which a multi-antenna hybrid access point (H-AP) beamforms energy to a single antenna user in order to assist uplink information transfer was presented. Motivated by the advantages of FD and WPT, some recent works have also investigated the performance of wireless-powered bidirectional communications \cite{Ju-TCOM14,Yamazaki-WCNC15,Hu-ARXIV15,Gao-WCNC15}. 

In wireless-powered FD networks, the deployment of multiple antennas can be considered as a practical solution, since the strategy is useful to harvest higher amount of energy \cite{Ding-COMM15} as well as to deploy spatial beamforming techniques to suppress LI \cite{Riihonen-TSP11}. In \cite{Ju-TCOM14}, considering a FD H-AP that broadcasts wireless energy to a set of downlink users while receiving information from a set of uplink users, a solution to an optimal resource allocation problem was presented. In \cite{Yamazaki-WCNC15}, hardware implementation of a wireless system, that transmits data and power in the same frequency, was presented. In \cite{Gao-WCNC15}, performance of a wireless-powered FD communication network, which consists of a dual-antenna FD H-AP and a single dual-antenna FD user, was investigated. Specifically, assuming different roles for the two antennas (downlink WPT or uplink wireless information transfer), closed-form expressions for the system's outage probability and the ergodic capacity were derived. More recently, in \cite{Hu-ARXIV15}, a weighted sum transmit power optimization problem for a bidirectional FD system with WPT was formulated and solved. However, it assumes perfect LI cancellation at terminals which is impossible in practice \cite{Riihonen-TWC11}.

Inspired by wireless-powered FD communications, in this paper, we consider bidirectional communication between an $N$-antenna BS and a mobile station (MS) with two antennas. According to ``harvest-then-transmit'' protocol \cite{ChaliseSPAWC2016}, the BS first transmits energy to the MS, which is used by the MS for the subsequent uplink transmission. At the end of the energy transfer phase, both BS and MS simultaneously transfer information in the uplink and downlink, thanks due to the FD operation. Specifically for this setup, we propose methods for jointly optimizing the beamformer at the BS and the time-splitting (TS) parameter that divides a given time-slot into energy harvesting and  data transmission phases. Both full and partial CSI cases are considered. In the former case, where the instantaneous channel is known, the optimized boundary of the BS-MS rate region is obtained, which describes the trade-off between BS and MS information rate. To this end, a computationally efficient optimum method,  based upon semidefinite relaxation (SDR) and line-search,  is proposed and its performance is compared with a sub-optimum method that uses the zero-forcing (ZF) criterion for designing the beamformer. 
 
In the partial CSI case, the BS and MS know only the second-order statistics, such as channel covariance matrices,  of their transmit CSI. It is also worth mentioning that the BS rate turns to be much smaller than the MS rate, since the MS transmits with the harvested energy which, in general, is much smaller than the transmit power of the BS. Moreover, the maximum possible value of the BS rate cannot be achieved in the partial CSI case. Due to these reasons, it is important to ensure that the BS is not in outage rather than maximize the BS-rate which is already constrained by the MS's transmit power. Hence, we propose to maximize the ergodic information rate at the MS, while ensuring that the outage probability at the BS remains below a certain threshold value. This optimization problem  is non-convex and non-tractable. As such, we derive an upper bound of the BS outage probability, and formulate an optimization problem so that the gap between the derived upper bound and the exact outage  probability remains minimum. In particular, using the upper bound of the outage probability, we maximize the ergodic information rate at the MS. We utilize the monotonicity property of the derived  exact ergodic information rate and formulate an SDR optimization problem, that is efficiently solved with a convex optimization toolbox. 
 
The main contributions of this paper are summarized as follows:
\begin{itemize}
\item In the case of full CSI,  the joint optimization problem of transmit beamforming and TS parameter is efficiently solved as an SDR problem. The optimality of the relaxation is confirmed with a proof that the optimum solution of the relaxed problem is rank-one.  
\item We show that the MS-rate is a  monotonically decreasing function of the TS parameter, and this property is utilized to efficiently solve the SDR-based joint optimization. A closed-form expression of the TS parameter is derived in the ZF-based sub-optimum design.  
\item  Closed-form expressions for the ergodic MS-rate and BS outage probability are derived. For a given TS parameter, we show that the ergodic MS-rate is a monotonically increasing function of the beamformer gain towards the MS. We, then,  utilize this property for solving the problem of maximizing the MS ergodic rate, while satisfying the outage probability constraint at the BS. 
\item Since the optimization problem remains non-tractable with the original outage probability, we derive its upper bound and approximate the original optimization problem with the SDR problem. The proposed optimization tries to minimize the gap between the exact outage probability and its upper bound.    
\end{itemize}

The rest of the paper is organized as follows. The system model and problem formulation are presented in Section \ref{sec:Section1}.  The optimization problems for full and partial CSI cases are solved in sections  \ref{sec:Section2}  and  \ref{sec:Section3}, respectively. In Section \ref{sec:Section4}, numerical results are provided, whereas in Section \ref{sec:Section5}, conclusions are drawn. 

\hspace*{0.3cm} {\it Notation:} Upper (lower) bold face letters will be used for matrices (vectors); $(\cdot)^T$, $(\cdot)^{H}$, ${\rm E}\left\{\cdot\right\}$, ${\bf I}$ and  $||{\cdot}||$ denote transpose, Hermitian transpose, expectation w.r.t. a random variable $x$, identity matrix, and Frobenius norm (Euclidean norm for a vector),  respectively. ${\rm tr} (\cdot)$, ${\mathcal C}^{M\times M}$,  and ${\bf A}\succeq 0$ denote the matrix trace operator,  space of $M\times M$ matrices with complex entries, and positive semidefiniteness of ${\bf A}$, respectively.

\section{System Model and Problem Formulation }
\label{sec:Section1}
We consider bidirectional FD communications  between an $N$-antenna BS and a MS as shown in Fig. \ref{fig-fig1}. Specifically, the BS has $N_t$ transmit antennas and $N_r\triangleq N-N_t$ receive antennas. Notice that, $N_t$, together with the chosen transmit/receive antennas could be optimized, but we keep them fixed. Although joint antenna selection and beamformer optimization is an interesting future work, it requires a multi-stage optimization approach and, thus, is is not considered in this work. The MS is an energy constrained device and harvests energy from the signals transmitted by the  BS. The MS, then, utilizes the harvested energy for its uplink transmission. Since the MS is energy constrained and depends on the harvested energy (which assumes typically small values), as in \cite{Hu-ARXIV15} we assume that the MS is equipped with two antennas (one antenna of the MS is used for transmission, whereas the other is used for reception). This assumption is further motivated by the fact that the space constraint prevents mounting more antennas at the MS. Moreover, since the BS is equipped with multiple antennas in our system, for sufficient amounts of energy harvesting, beamforming can be effectively used \cite{Zhang-TWC13}.

\begin{figure*}[htb!]
  \centering
  \subfigure[ EH phase of duration $\alpha T$.]{\includegraphics[scale=0.6]{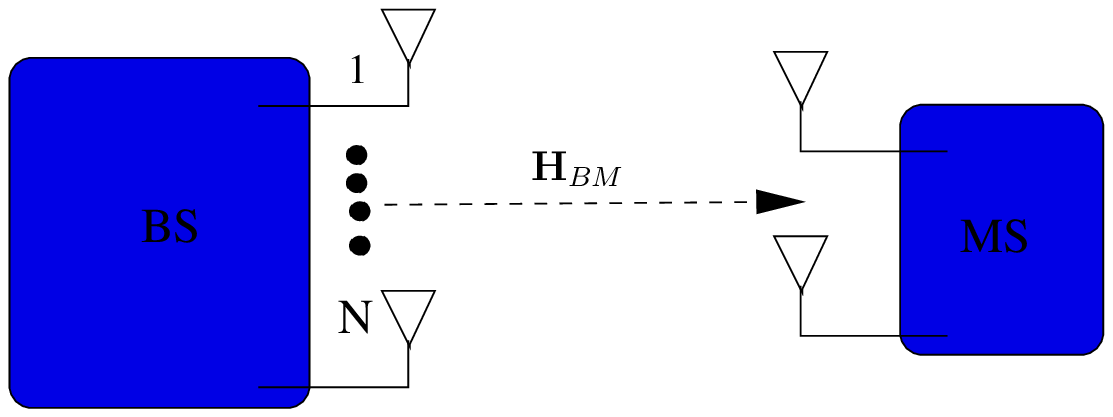}}\quad
  \subfigure[ FD communication phase of duration $(1-\alpha)T$.]{\includegraphics[scale=0.4]{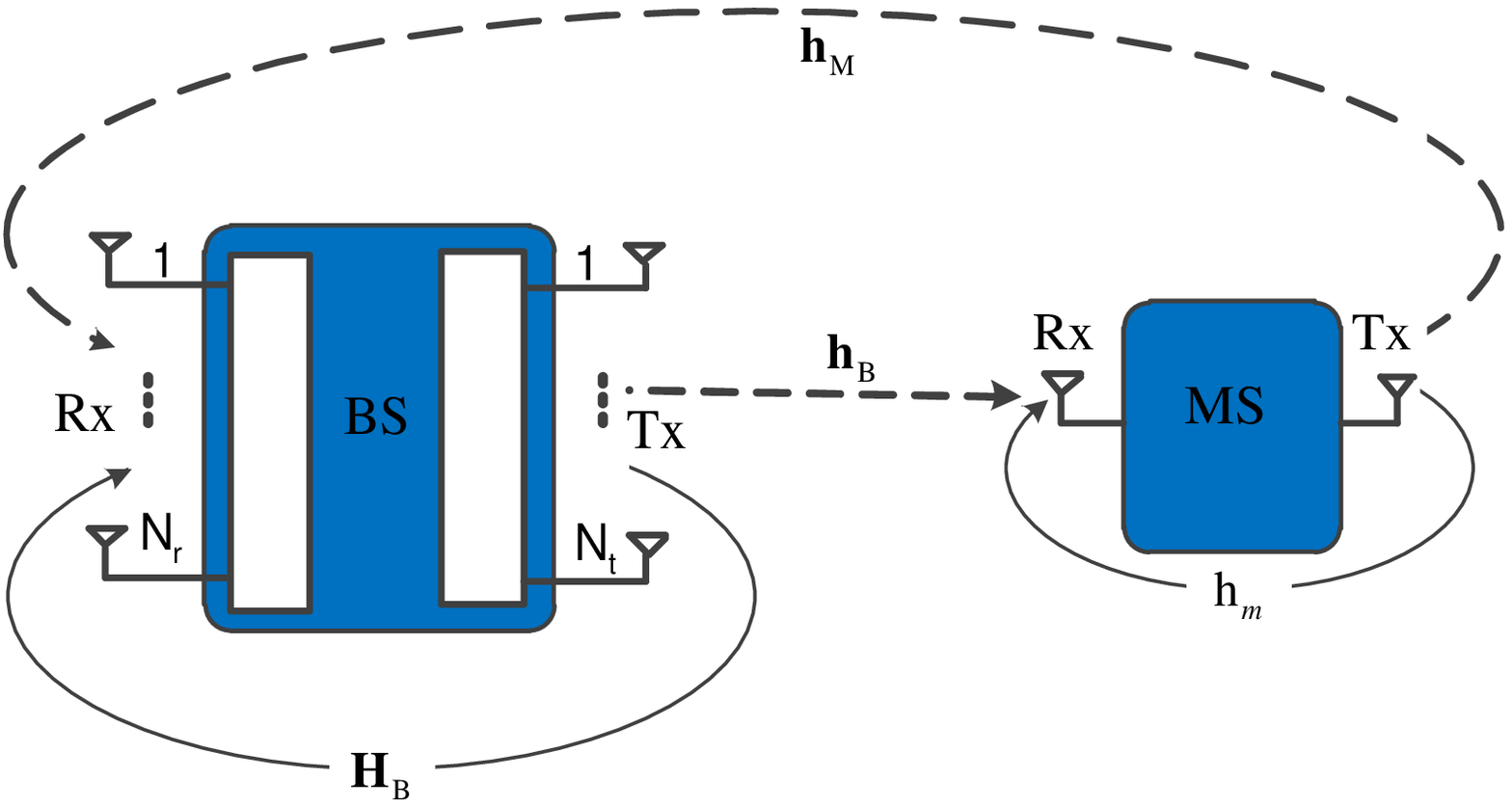}}
  \caption{Two phases of wireless-powered communication system.}
  \label{fig-fig1}
\end{figure*}

Without loss of generality, assuming a block time of $T=1$, communication between the BS and MS takes place in two phases with duration $\alpha$ and $(1-\alpha)$, respectively. In phase I, the BS employs all of its antennas to transmit energy, whereas the MS employs both of its antennas for reception. The signal received by the MS during energy harvesting phase is given by ${\bf y}_E={\bf H}_{BM}{\bf w}_E s_E+{\bf n}_E$, where  $\qH_{BM}\in {\mathcal C}^{2\times N}$ is the channel between the BS and the MS,  ${\bf w}_E \in {\mathcal C}^{N\times 1}$ is the energy beamformer, $s_E$ is the signal transmitted by the BS, and ${\bf n}_E$ is the additive white Gaussian noise (AWGN) at the MS. We assume that the harvested energy due to the noise (including both the antenna noise and the rectifier noise) is small and thus ignored \cite{MohammadiTC2016}. Thus, assuming that ${\rm E}\{|s_E|^2\}=1$ during the period of $\alpha$, the harvested energy can be expressed as $E=\eta \alpha || {\bf H}_{BM}{\bf w}_E||^2=\eta \alpha {\rm tr}\left( {\bf w}^H_E  {\bf H}^H_{BM} {\bf H}_{BM}  {\bf w}_E\right)$, where $\eta$ is the conversion efficiency of the rectifier circuit at the MS. Considering that $|| {\bf w}_E||^2=P$, where $P$ is the total transmit power of the BS during energy harvesting phase, it is clear that the optimum   ${\bf w}_E$ is given by ${\bf w}_E=\sqrt{P}{\bf v}_{\rm max}$, where ${\bf v}_{\rm max}$ is the eigenvector corresponding to the largest eigenvalue of the matrix  ${\bf H}^H_{BM} {\bf H}_{BM}$. This means, the harvested energy is given by
 \be
    E = P \alpha{\bar \lambda}{ \left(\qH_{BM}\qH_{BM}^H\right)},
 \ee
where the channel between the BS and the MS is denoted as $\qH_{BM}$ and ${\bar \lambda}(\cdot)$ returns the maximum eigenvalue of a matrix. As such, the BS requires transmit CSI which is obtained through {\it reverse-link training via channel reciprocity}\cite{BrunoRui2017} approach. More specifically, assuming that the BS-MS and MS-BS channels are reciprocal, the MS first sends training{\footnote{Note that the MS will not be completely operated with the harvested RF energy. The energy from a battery can be used to support most critical and basic functions, such as switching on/off of  transceiver circuits, sending control and training signals, etc. This assumption is standard in the wireless energy harvesting communications  literature (see for e.g., \cite{BrunoRui2017} and the references therein).} and the BS then estimates the channel and performs optimum energy beamforming. Although it requires CSI at the BS, the harvested energy due to beamforming gain can be much larger than the energy consumed for sending training signals \cite{BrunoRui2017}. On the other hand, we will relax the full CSI requirement by considering partial CSI case and solving the corresponding optimization problem in Section \ref{sec:Section3}. Thus, the reported results of the full CSI case serve as useful theoretical bounds for practical design.

Note that, in (1) the energy that can be harvested from noise is omitted since for all practical purposes it is negligible. $\qH_{BM}$ is expressed as $\qH_{BM}=\sqrt{\frac{1}{d^\tau}}{\bar \qH}_{BM}$, where $d$ is the distance between the BS and MS, $\tau$ is the path loss exponent, and each element of ${\bar \qH}_{BM}$ has zero-mean and unit-variance. In phase II, since both terminals operate in the FD mode, the BS and MS simultaneously communicate with each other. The transmit power of MS can be written as
 \be
    p_m =  \frac{ \alpha\eta P{\bar \lambda}{ \left(\qH_{BM}\qH_{BM}^H\right)} }{(1-\alpha)},
 \ee
 where $\eta$ is the conversion efficiency of wireless energy transfer. Let the $1\times N_t$ $BS\rightarrow MS$ channel be $\qh_{B}^H=\sqrt{\frac{1}{d^\tau}}{\bar \qh}_{B}^H$ and the $(N-N_t)\times 1$ $MS\rightarrow BS$ channel be $\qh_{M}=\sqrt{\frac{1}{d^\tau}}{\bar \qh}_M$, where all elements of ${\bar \qh}_{B}$ and ${\bar \qh}_{M}$ have zero-mean and unit variance. The residual LI channels are $\qH_B \in \mathbb{C}^{(N-N_t)\times N_t}$ and $h_m$ at the BS and MS, respectively. In order to reduce the deleterious effects of LI on system performance, we assume that an analog/digital cancellation scheme can be employed at the BS and MS, respectively and as such the residual channels are modeled as feedback fading channels \cite{Riihonen-TWC11, Riihonen-TSP11}. Since such a cancellation scheme can be characterized by a specific residual power, each element of ${\bf H}_B$, and $h_m$ can be modeled as zero-mean circularly symmetric complex Gaussian (ZMCSCG) random variables of variances $\sigma^2_{h_b}$ and $\sigma^2_{h_m}$, respectively. Modeling of the residual LI channel in such a way is now common and a standard assumption in the FD literature since the dominant line-of-sight component in LI can be removed effectively when a cancellation method is implemented \cite{Duarte-THESIS}. It is also important to emphasize that perfect cancellation of LI is not possible due to imperfect estimation of LI channel, inevitable transceiver chain impairments  \cite{Riihonen-TSP11}, \cite{Masmoudi}, and inherent processing delay. Therefore, ${\bf H}_B$ and $h_m$ can assume relatively large values and their effects can be minimized with spatial suppression techniques \cite{MohammadiTC2016}. 

\subsection{Signal Model}
The received signals at the BS and MS, are, respectively 
 \bea
 \label{eq:sigmod1}
    \qy_B =  \sqrt{p_m}\qh_{M} s_M + \qH_B \qw_B s_B + \qn_B,\nonumber\\
    y_M =  \qh_B^H \qw_B s_B +  \sqrt{p_m} h_m s_M + n_M,
 \eea
where $\qy_B\in{\mathcal C}^{(N-N_t) \times 1}$, $s_M$ and $s_B$ are the information symbols transmitted by the MS and BS, respectively, $\qw_B \in {\mathcal C}^{N_t\times 1} $ is the beamformer at the BS, $\qn_B \in {\mathcal C}^{(N-N_t) \times 1}$  is the additive White Gaussian noise (AWGN) vector at the receive antenna elements of the BS, and $n_M \in {\mathcal C}$ is the AWGN at the receive antenna of the MS. Furthermore, it is assumed that   ${\rm E}\left\{ s_B  \right\}={\rm E}\left\{ s_M  \right\}=0$,  ${\rm E}\left\{ |s_B|^2  \right\}={\rm E}\left\{ |s_M|^2  \right\}=1$, ${\rm E}\left\{  \qn_B\right\}={\bf 0}$,   ${\rm E}\left\{  n_M\right\}=0$, ${\rm E}\left\{  \qn_B \qn_B^H\right\}=\sigma_b^2{\bf I}_{(N-N_t)}$,   ${\rm E}\left\{  |n_M|^2\right\}=\sigma_m^2$, and signals and noise are statistically independent. We also consider that $||{\bf w}_B||^2=P$, which means that the BS transmits with the same power, $P$,  during both energy harvesting and communication phases. It is worthwhile to note that the BS can transmit different powers in two phases and the system performance can be further improved by optimizing these powers. However, this leads to a new optimization problem which is beyond the scope of the current work.   

\hspace*{0.3cm} The BS applies a beamformer $\qr_B\in{\mathcal C}^{(N-N_t)\times 1}$ to the received signal $\qy_B$. Without loss of generality, it is assumed that $||{\bf r}_B||=1$. The output after beamforming is given by
\bea
y_B=\qr^H_B\qy_B=\qr^H_B\left( \sqrt{p_m}\qh_{M} s_M + \qH_B \qw_B s_B + \qn_B\right). 
\eea
The signal-to-interference-and-noise ratio (SINR) at the BS and MS are then given by
\bea
SINR_B &= & \frac{p_m |\qr_B^H  \qh_{M}|^2}{\sigma_b^2+|\qr_B^H\qH_B\qw_B|^2},
\eea
and
\bea
SINR_M&=& \frac{|\qh_B^H\qw_B|^2}{\sigma_m^2+ p_m |h_m|^2},
\eea
respectively. For a given $\qw_B$, the optimum $\qr_B$ is the one that maximizes $SINR_B$ and, thus,  is obtained by solving
\bea
\label{eq:RxBF1}
\max\limits_{||\qr_B||=1} \frac{ \qr_B^H  \qh_{M} \qh^H_{M} \qr_B}{ \qr_B^H\left( \sigma_b^2\qI+\qH_B\qw_B\qw_B^H \qH_B^H\right)\qr_B},
\eea
which is in generalized Rayleigh quotient form \cite{Horn}. It is well known that the maximum value in (\ref{eq:RxBF1}) is obtained when
\bea
\label{eq:RxBF2}
\qr_B=\frac{ \left( \sigma_b^2\qI+\qH_B\qw_B\qw_B^H \qH_B^H\right)^{-1} \qh_M}{  || \left( \sigma_b^2\qI+\qH_B\qw_B\qw_B^H \qH_B^H\right)^{-1} \qh_M  || }.
\eea
Substituting the optimum $\qr_B$ into $SINR_B$, it is clear that 
\bea
\label{eq:RxBF3}
SINR_B=p_m\qh_M^H(\sigma_b^2\qI+  \qH_B\qw_B\qw_B^H\qH_B^H )^{-1}\qh_M.
\eea
Consequently, the BS achievable rate is given by
\bea
r_B=(1-\alpha)\log_2\left( 1 +  p_m \qh_M^H(\sigma_b^2\qI+  \qH_B\qw_B\qw_B^H\qH_B^H )^{-1}\qh_M\right),\nonumber
\eea
which with the help of Sherman-Morrison formula \cite{Hager} can also be expressed as
\bea
r_B \hspace*{-0.3cm} &= \hspace*{-0.3cm} & (1-\alpha)\log_2\left( 1 + \frac{ p_m}{\sigma_b^2} \left(\|\qh_M\|^2  -
\frac{|\qh_M^H\qH_B\qw_B |^2  }{\sigma_b^2+
\|\qH_B\qw_B\|^2} \right)\right). \nonumber
\eea
On the other hand, the achievable rate at the MS can be written as
 \bea
   r_M \hspace*{-0.3cm} &=\hspace*{-0.3cm} & (1-\alpha)\log_2\left( 1 + \frac{|\qh_B^H\qw_B|^2}{\sigma_m^2+ p_m |h_m|^2}\right).
 \eea

\subsection{Problem Formulation}
 Our objective is to characterize the bidirectional communications with the MS-BS rate region. It can be obtained by maximizing the MS rate while ensuring that the BS-rate is equal to a certain value, $R_B$. By solving this optimization problem for all  $R_B$, where $R_B\in [0, R_B^{\rm max}]$ and  $R_B^{\rm max}$ is the maximum value of BS rate, we obtain the MS-BS rate region. Note that  $R_B^{\rm max}$ is obtained by solving $R_B^{\rm max}=\max\limits_{||\qw_B||=1} r_B$. The closed-form expression of   $R_B^{\rm max}$  is derived in Appendix A. As such, the optimization problem for a given $R_B$ is expressed as
\bea
 \label{eq:ProbForm1}
     \max_{ \{\qw_B, \alpha\} }   \hspace*{-0.3cm}& & \hspace*{-0.3cm} ~(1-\alpha)\log_2\left( 1 + \frac{|\qh_B^H\qw_B|^2}{\sigma_m^2+ p_m
    |h_m|^2}\right)\nonumber\\
\hspace*{-0.3cm}     \mbox{s.t.}~~  \hspace*{-0.3cm}& & \hspace*{-0.3cm}   (1-\alpha)\log_2\biggl( 1 +  \frac{p_m}{\sigma_b^2} \|\qh_M\|^2  - \biggr. \nonumber \\ & &\hspace*{1.5cm}  \biggl.  \frac{\frac{p_m}{\sigma_b^2} |\qh_M^H\qH_B\qw_B |^2  }{\sigma_b^2+ \|\qH_B\qw_B\|^2}  \biggr) =R_B,\nonumber\\
          \hspace*{-0.3cm}& & \hspace*{-0.3cm} || \qw_B||^2 \leq P, ~~0<\alpha<1,\nonumber\\
           \hspace*{-0.3cm}& & \hspace*{-0.3cm} p_m = \frac{ \alpha\eta P{\bar \lambda}{ \left(\qH_{BM}\qH_{BM}^H\right)}  }{(1-\alpha)}.
         \eea
The optimization problem in \eqref{eq:ProbForm1} is a complicated non-convex problem, w.r.t. $\qw_B$ and $\alpha$. However, it can be efficiently solved by finding the optimum $\qw_B$ for a given $\alpha$ and vice-versa. Since $\alpha$ is a scalar, the optimum solution can be ascertained by using one-dimensional search, w.r.t. $\alpha$. 

 \section{Joint Optimization with Perfect CSI}
\label{sec:Section2}
In this section, we propose the optimum and sub-optimum methods for solving the joint optimization of beamformer and the TS  parameter in (\ref{eq:ProbForm1}), when perfect CSI is available. In practice, CSI is subject to different types of errors (e.g., channel estimation, feedback,  delay and quantization errors)  and, hence, the joint optimization with the assumption of perfect CSI  enables us to obtain the outer  boundary of the BS-MS rate region. Such a boundary provides an upper bound performance that can be achieved by a system that is subject to erroneous CSI.  
\subsection{Optimum Method}
In this method, the jointly optimal $\qw_B$  and $\alpha$ are determined by obtaining the optimum $\qw_B$  for each  $\alpha$, and then choosing  those  $\qw_B$  and $\alpha$  that maximize the objective function in  (\ref{eq:ProbForm1}). For this purpose, a grid-search over $\alpha$ is required, which is just one-dimensional (or linear). Exploiting the structure of the problem (\ref{eq:ProbForm1}), we show that the required  grid search  can be limited to a small region of $\alpha$, and hence, the computational cost for solving the joint optimization is minimized. 
\subsubsection{ Optimization of $\qw_B$}
   We first consider a problem to optimize $\qw_B$ for a given $\alpha$. In this case, the optimization problem  in \eqref{eq:ProbForm1} is expressed as
   \bea
  \label{eq:ProbForm2}
   \max_{ \qw_B }~   \hspace*{-0.3cm}& & \hspace*{-0.3cm} ~(1-\alpha)\log_2\left( 1 + \frac{|\qh_B^H\qw_B|^2}{\sigma_m^2+ p_m
    |h_m|^2}\right)\nonumber\\
\hspace*{-0.3cm}     \mbox{s.t.}~~  \hspace*{-0.3cm}& & \hspace*{-0.3cm}   (1-\alpha)\log_2\biggl( 1 + \frac{ p_m}{\sigma_b^2} \|\qh_M\|^2  - \biggr. \nonumber \\ & &\hspace*{1.5cm}  \biggl.  \frac{\frac{p_m}{\sigma_b^2} |\qh_M^H\qH_B\qw_B |^2  }{\sigma_b^2+ \|\qH_B\qw_B\|^2}  \biggr) =R_B,\nonumber\\
          \hspace*{-0.3cm}& & \hspace*{-0.3cm} || \qw_B||^2 \leq P.
         \eea
 Since $\log(1+x)$ is a monotonically increasing function of $x$ and the denominator of $x\triangleq  \frac{|\qh_B^H\qw_B|^2}{\sigma_m^2+ p_m
    |h_m|^2}$  is independent of $\qw_B$, (\ref{eq:ProbForm2}) can be solved via
 \bea
  \label{eq:ProbForm3}
    \max_{ \qw_B }~ \hspace*{-0.3cm}& \hspace*{-0.3cm}&   |\qh_B^H\qw_B|^2  \nonumber\\
    \mbox{s.t.}~~ \hspace*{-0.3cm}& \hspace*{-0.3cm}& \frac{|\qh_M^H\qH_B\qw_B |^2  }{\sigma_b^2+ \|\qH_B\qw_B\|^2}= \Gamma_B,\\
    \hspace*{-0.3cm}& \hspace*{-0.3cm}& \|\qw_B\|^2 \le P\nonumber
   \eea
where  $\Gamma_B\triangleq || \qh_M||^2-\frac{\sigma_b^2}{p_m}\left[ 2^{\frac{R_B}{1-\alpha}} -1 \right]$. It is clear that the objective function in \eqref{eq:ProbForm3} is maximized when $||\qw_B||^2=P$. This optimization problem is non-convex because it is the maximization of a quadratic function with a quadratic equality constraint. Moreover, to the best of our knowledge, \eqref{eq:ProbForm3}  does not admit a closed-form solution. However, it can be efficiently and optimally solved using semi-definite programming.  For this purpose, we define $\qV_B\triangleq \qw_B\qw_B^H$ and relax the rank-one constraint, ${\rm rank}(\qV_B)=1$. The relaxed form of  (\ref{eq:ProbForm3})  is given by 
 \bea
   \label{eq:ProbForm4}
    \max_{\qV_B}~ \hspace*{-0.3cm}& & \hspace*{-0.3cm} f(\alpha)=  {\rm tr}(\qV_B \qh_B \qh_B^H )\nonumber\\
    \mbox{s.t.}~ \hspace*{-0.3cm}& & \hspace*{-0.3cm} {\rm tr}(\qV_B \qH_B^H \qh_M\qh_M^H\qH_B )=\Gamma_B\left( \sigma_b^2+  {\rm tr}(\qV_B \qH_B^H \qH_B )\right),\nonumber\\
   \hspace*{-0.3cm}& & \hspace*{-0.3cm} {\rm tr}(\qV_B ) =P , ~\qV_B\succeq 0.
   \eea
The optimization problem in (\ref{eq:ProbForm4}) is a standard SDR problem \cite{Boyd}. In the following, we show that its optimum solution is rank-one.
\hspace*{-0.3cm}\begin{proposition}
\label{ProofPropositionRankOne}
The rank-one optimum solution is  guaranteed in \eqref{eq:ProbForm4}.
\end{proposition}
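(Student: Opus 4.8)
The plan is to attack \eqref{eq:ProbForm4} through its Karush--Kuhn--Tucker (KKT) conditions. Problem \eqref{eq:ProbForm4} is a convex semidefinite program and, whenever $R_B$ is strictly feasible, it admits a strictly feasible (Slater) point, so strong duality holds and the KKT system is necessary and sufficient for optimality. I would attach a real multiplier $\lambda$ to the SINR equality constraint, a real multiplier $\mu$ to the power constraint ${\rm tr}(\qV_B)=P$ (which is active, since inflating $\|\qw_B\|^2$ only helps the objective while the SINR constraint can be re-satisfied by a direction adjustment), and a matrix multiplier $\qZ\succeq 0$ to $\qV_B\succeq 0$. Writing $\qC\triangleq\qH_B^H\qh_M\qh_M^H\qH_B-\Gamma_B\,\qH_B^H\qH_B$, the stationarity condition reads
\be
\label{eq:kktstat}
\qZ=\mu\qI+\lambda\qC-\qh_B\qh_B^H,
\ee
together with complementary slackness $\qZ\qV_B=\qzero$.

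The decisive step is to post-multiply \eqref{eq:kktstat} by an optimal $\qV_B$ and use $\qZ\qV_B=\qzero$, giving $(\mu\qI+\lambda\qC)\qV_B=\qh_B(\qh_B^H\qV_B)$. If $\qD\triangleq\mu\qI+\lambda\qC$ is nonsingular, then $\qV_B=\qD^{-1}\qh_B(\qh_B^H\qV_B)$ is the outer product of the vector $\qD^{-1}\qh_B$ with the row vector $\qh_B^H\qV_B$, hence has rank at most one; and ${\rm tr}(\qV_B)=P>0$ rules out $\qV_B=\qzero$, so the optimum is exactly rank one and equals $\qw_B\qw_B^H$ for some $\qw_B$ feasible in \eqref{eq:ProbForm3}, with the explicit structure $\qw_B\propto(\mu\qI+\lambda\qC)^{-1}\qh_B$. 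The main obstacle is therefore to establish the nonsingularity of $\qD$. Two facts drive it. First, \eqref{eq:kktstat} gives $\qD=\qZ+\qh_B\qh_B^H\succeq 0$, and if $\qD\qv=\qzero$ then $0=\qv^H\qZ\qv+|\qh_B^H\qv|^2$ forces $\qZ\qv=\qzero$ and $\qh_B^H\qv=\qzero$, so any null vector of $\qD$ lies in $\qh_B^{\perp}$ and obeys $\lambda\qC\qv=-\mu\qv$. Second, one argues $\mu>0$ and $\lambda\ge 0$: the SINR constraint, which originates from $r_B\ge R_B$ and is thus effectively of the form ${\rm tr}(\qC\qV_B)\le\Gamma_B\sigma_b^2$, has a nonnegative multiplier, and $\mu\le 0$ together with $\lambda\ge 0$ would force $\lambda\qC\succeq-\mu\qI\succeq\qzero$, which is impossible when $\Gamma_B>0$ and ${\rm rank}(\qH_B)\ge 2$ since the rank-one $\qH_B^H\qh_M\qh_M^H\qH_B$ cannot dominate a positive multiple of $\qH_B^H\qH_B$. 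Hence a null vector of $\qD$ would be an eigenvector of $\qC$ for the negative eigenvalue $-\mu/\lambda$ that is simultaneously orthogonal to $\qh_B$; since the $BS\to MS$ channel $\qh_B$ is generated independently of $\qH_B,\qh_M$ from a continuous distribution, it is almost surely not orthogonal to any eigenvector of $\qC$, so $\qD$ is nonsingular with probability one and the claim follows.

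Because that last step leans on a general-position argument, I would also record a fully deterministic alternative that sidesteps the invertibility question, exploiting that \eqref{eq:ProbForm4} carries only two scalar equality constraints. Suppose an optimal $\qV_B^\star$ has rank $r\ge 2$. The real vector space of Hermitian $\bm{\Delta}$ with ${\rm range}(\bm{\Delta})\subseteq{\rm range}(\qV_B^\star)$, ${\rm tr}(\bm{\Delta})=0$ and ${\rm tr}(\bm{\Delta}\qC)=0$ has dimension at least $r^2-2$, which is positive for $r\ge 2$, so it contains a nonzero $\bm{\Delta}$. If ${\rm tr}(\bm{\Delta}\qh_B\qh_B^H)\ne 0$ for some such $\bm{\Delta}$, then $\qV_B^\star+t\bm{\Delta}$ is feasible for small $|t|$ and strictly improves the objective for a suitable sign of $t$, contradicting optimality; hence every such $\bm{\Delta}$ leaves the objective unchanged, and moving $\qV_B^\star+t\bm{\Delta}$ to the boundary of the positive semidefinite cone preserves optimality and both equalities while dropping the rank by at least one (and never to zero, since the trace stays $P$). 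Iterating drives the rank to one. I would present the KKT derivation as the main proof, since it also exposes the structure of the optimal beamformer, and keep the dimension-counting argument as the robust backstop.
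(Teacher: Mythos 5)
Your primary argument is essentially the paper's own proof: both form the Lagrangian of \eqref{eq:ProbForm4}, use complementary slackness $\qY\qV_B=\qzero$ together with the stationarity identity $\qY=(\lambda_2\qI+\lambda_1\qC)-\qh_B\qh_B^H$ (your $\qD-\qh_B\qh_B^H$), and reduce everything to showing that the matrix $\qD=\lambda_2\qI+\lambda_1\qC$ is nonsingular, whence the optimal $\qV_B$ is confined to a one-dimensional subspace. The two proofs part ways only at that nonsingularity step. The paper argues deterministically via Weyl's inequality that $\qD$ must be positive \emph{definite} for $\qY\succeq 0$ to hold (an argument that is itself slightly loose in the boundary case where $\qD$ has an exact zero eigenvalue whose eigenvector happens to be orthogonal to $\qh_B$); you only establish $\qD\succeq 0$ and then dispose of the singular case by a genericity argument on $\qh_B$, which is weaker in that it yields the conclusion only almost surely rather than for every channel realization. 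What rescues your write-up is the second, fully deterministic rank-reduction (purification) argument: since \eqref{eq:ProbForm4} has only two scalar constraints besides $\qV_B\succeq 0$, any optimal solution of rank $r\ge 2$ admits a nonzero feasible perturbation within its range that preserves both equalities, and pushing it to the boundary of the positive semidefinite cone strictly reduces the rank without leaving the optimal set. This argument does not appear in the paper, covers the degenerate configurations that both KKT arguments struggle with, and establishes the proposition exactly as stated (existence of a rank-one optimum). The KKT route buys the explicit beamformer structure $\qw_B\propto\qD^{-1}\qh_B$; the dimension-counting route buys unconditional validity. Presenting the former as the main proof with the latter as a backstop, as you propose, is a sound and arguably more complete treatment than the paper's.
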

\begin{proof}
The proof is based on Karush-Kuhn-Tucker (KKT) conditions and given in Appendix B.
\end{proof}
Let $\qV_B^*$ be the optimum solution of \eqref{eq:ProbForm4}. Since  $\qV_B^*$ is a rank-one matrix,  the optimum solution $\qw_B^*$  is obtained as $\qw_B^*=\sqrt{P}{\tilde \qu}{\tilde \qu}^H$, where ${\tilde \qu}$ is the eigenvector corresponding to the non-zero eigenvalue of $\qV_B^*$ . 

\subsubsection{ Optimization of $\qw_B$ and $\alpha$}
In order to jointly optimize $\qw_B$ and $\alpha$, we solve the SDR problem in (\ref{eq:ProbForm4}) by using one-dimensional (or line search) search over $\alpha$. This line search can be confined to a small region of $\alpha$, and therefore, the number of required SDR optimizations can be significantly minimized. To illustrate this, let the objective function in (\ref{eq:ProbForm2}), for a given $\qw_B^{*}$, be defined as
\bea
\label{eq:ProbForm4Ex1}
f(\alpha)=(1-\alpha)\log_2\left( 1 +  \frac{{\tilde \beta}}{c+\frac{\alpha b}{1-\alpha}}\right)
\eea
where
\bea
  \label{eq:ProbForm4Ex2}
   {\tilde \beta}=  \frac{|\qh_B^H\qw_B^*|^2}{|h_m|^2}, c=\frac{\sigma_m^2}{|h_m|^2}, b=\eta P {\bar \lambda}\left(\qH_{BM}\qH_{BM}^H\right).
         \eea
The derivative of $f(\alpha)$ w.r.t. $\alpha$ can be written as 
\bea
 \label{eq:ProbForm4Ex3}
\frac{d f(\alpha)}{d \alpha}=-\log_2(g(\alpha))-\frac{b {\tilde \beta}  g(\alpha)^{-1}}{(1-\alpha)\log(2)}\left(c+\frac{\alpha b}{1-\alpha}\right)^{-2}
\eea
where $g(\alpha)= 1 +  \frac{ {\tilde \beta}}{c+\frac{\alpha b}{1-\alpha}}\geq 0, \forall \alpha \in [0, 1]$. It is clear from  (\ref{eq:ProbForm4Ex3}) that $\frac{d f(\alpha)}{d \alpha}<0$ for all $\alpha$, i.e., $f(\alpha)$ is a monotonically decreasing function of $\alpha$. This means that maximum value of the objective function in (\ref{eq:ProbForm4Ex1})  is achieved when $\alpha$ is minimum, provided that the equality constraint in (\ref{eq:ProbForm4}) is fulfilled. However, as $\alpha\rightarrow 0$, $\Gamma_B\rightarrow -\infty$, i.e., the chance that the SDR optimization problem (\ref{eq:ProbForm4}) is infeasible increases. Consequently, the optimum $\alpha$ is the minimum one for which  (\ref{eq:ProbForm4}) is feasible. The output $\qV_B$ of such feasible SDR provides the optimum ${\bf w}_B$.
 In a nutshell, the steps of the proposed algorithm ({\bf Algorithm 1}) for the joint optimization  problem can be summarized as follows:
\begin{itemize}
\item[ ] 1) Define a fine grid of $\alpha$ in steps of $\partial\alpha$. Start with $\alpha=0$.  
\item[ ] 2) Solve  (\ref{eq:ProbForm4})  with the increment of $\partial\alpha$.
\item[ ] 3) If feasible, stop and output $\alpha$ and $\qV_B$. 
\item[ ] 4) If not,  go to step (2).
\end{itemize}
Since this algorithm terminates as soon as (\ref{eq:ProbForm4}) is feasible, the search region of $\alpha$ is usually limited to a range of its small values. We use CVX toolbox \cite{Boyd} to solve  the SDR problem  (\ref{eq:ProbForm4}). For a given solution accuracy of $\epsilon>0$, the worst-case complexity of this optimization problem  is given by ${\mathcal O}\left(N_t^{4.5} \log\left(\frac{1}{\epsilon}\right)\right)$ \cite{LuoMa}. While executing {\bf Algorithm 1}, the SDR problem is solved for that particular $\alpha$ for which the problem is feasible. Note that,  by considering that $\Gamma_B$ should be positive and $||{\bf w}_B ||^2\leq P$, necessary conditions for the feasibility \cite{FengZhu} of (\ref{eq:ProbForm4}) can be checked before calling the CVX routine. Therefore, the worst-case computational complexity of executing {\bf Algorithm 1} is only ${\mathcal O}\left(N_t^{4.5} \log\left(\frac{1}{\epsilon}\right)\right)$. 

\subsection{Suboptimal Method}
As a suboptimal method of optimizing $\qw_B$ and $\alpha$, we consider the ZF approach \cite{Suraweera-TWC14}. This requires that 
   \be
    \label{eq:ProbForm5}
        \qw_B^H\qH_B^H \qh_M=0.
   \ee
\subsubsection{Optimization of $\qw_B$}
Substituting  \eqref{eq:ProbForm5} into   \eqref{eq:ProbForm1}, the resulting optimization problem is expressed as
\bea
 \label{eq:ProbForm6}
    \max_{ \{\qw_B,  \alpha\}}~  \hspace*{-0.3cm}& & \hspace*{-0.3cm} (1-\alpha)\log_2\left( 1 + \frac{|\qh_B^H\qw_B|^2}{\sigma_m^2+ p_m
    |h_m|^2}\right)\nonumber\\ 
    \mbox{s.t.}~  \hspace*{-0.3cm}& & \hspace*{-0.3cm} (1-\alpha)\log_2\left( 1 +  \frac{p_m}{\sigma_b^2} \|\qh_M\|^2     \right)= R_B,\nonumber\\
          \hspace*{-0.3cm}& & \hspace*{-0.3cm} p_m =  \frac{ \alpha\eta P{\bar \lambda}{ \left(\qH_{BM}\qH_{BM}^H\right)} }{1-\alpha}\\
           \hspace*{-0.3cm}& & \hspace*{-0.3cm}  \|\qw_B\|^2  \le P,~~ 0\le \alpha\le 1,\nonumber\\
 \hspace*{-0.3cm}& & \hspace*{-0.3cm}    \qw_B^H\qH_B^H \qh_M=0.\nonumber
         \eea
For a given $\alpha$,  the optimization of $\qw_B$ becomes
 \bea
  \label{eq:ProbForm7}
    \max_{\qw_B } &&  |\qh_B^H\qw_B|^2\nonumber\\
    \mbox{s.t.} &&   \|\qw_B\|^2   \le P \\
&&    \qw_B^H\qH_B^H \qh_M=0.\nonumber
         \eea
Using the standard Lagrangian multiplier method and after some manipulations,  a closed-form solution of $\qw_B$ is obtained as
         \be
          \label{eq:ProbForm8}
            \qw_B = \sqrt{P} \frac{\qB\qh_B}{\|\qB\qh_B\|}, \qB =
            \qI - \frac{ \qH_B^H \qh_M\qh_M^H\qH_B}{\| \qH_B^H \qh_M\|^2}
         \ee
        where $\qw_B$ does not depend on $\alpha$ and $\qB$ is a projection matrix, i.e., $\qB^H\qB=\qB$. Consequently, the corresponding objective function in (\ref{eq:ProbForm7})  is
 \be
  \label{eq:ProbForm9}
    |\qh_B^H\qw_B|^2 = P
    \frac{|\qh_B^H\qB\qh_B|^2}{\|\qB\qh_B\|^2} =P \qh_B^H\qB\qh_B.
 \ee

\subsubsection{Optimization of $\alpha$}
Denote the suboptimal beamformer solution of   (\ref{eq:ProbForm8})  by ${\bar \qw}_B^{*}$.  The remaining optimization problem is expressed as 
 \bea
  \label{eq:ProbForm10}
    \max_{  0\le \alpha \le 1 } &&  f(\alpha)\triangleq (1-\alpha)\log_2\left( 1 +  \frac{{\bar \beta}}{c+\frac{\alpha b}{1-\alpha}}\right)\nonumber\\
    \mbox{s.t.} && (1-\alpha)\log_2\left( 1 + \frac{\alpha}{1-\alpha}b\gamma\right)=R_B,
         \eea
      where  ${\bar{\beta}}=\frac{|\qh_B^H{\bar {\bf w}}_B^{*}|^2}{|h_m|^2}$ and  $\gamma=\frac{|| \qh_M||^2}{\sigma_b^2}$.  
 Note  that the optimum $\alpha$ would be zero if there is no equality constraint (or if the constraint is, $R_B=0$). This is because the objective function in (\ref{eq:ProbForm10}) is a monotonically decreasing function of $\alpha$.  In the presence of equality constraint with $R_B>0$, it is clear that the optimum $\alpha$ is the smallest value that satisfies the equality constraint.  
\hspace*{-0.3cm}\begin{proposition}
\label{ProofPropositionOptalpha}
When equality constraint is feasible (i.e., $R_B\le R_B^{\rm max}$) , the optimum $\alpha$ is given by
\bea
\label{eq:ProbForm13}
{\bar \alpha}^{\rm opt}\hspace*{-0.3cm}&=\hspace*{-0.3cm}&\frac{ - \displaystyle{\frac{1}{{\bar R}_B}} W\left( -\frac{ {\bar R}_B}{b\gamma} {\rm e}^{  {\bar  R}_B \left(1-\frac{1}{b\gamma}\right)}\right)-\frac{1}{b\gamma}}{1-\displaystyle{\frac{1}{{\bar R}_B}} W\left( -\frac{ {\bar R}_B}{b\gamma} {\rm e}^{   {\bar R}_B \left(1-\frac{1}{b\gamma}\right)}\right)-\frac{1}{b\gamma} }
\eea
where ${\bar R}_B=R_B\log(2)$ and $W(y)$  is the Lambert function \cite{Knuth}. 
\end{proposition}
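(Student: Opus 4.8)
The plan is to solve the equality constraint of \eqref{eq:ProbForm10} in closed form, since the monotone decrease of $f(\alpha)$ — already established around \eqref{eq:ProbForm4Ex3} — reduces the problem to picking the smallest feasible $\alpha$. First I would pass to natural logarithms and write the constraint as $(1-\alpha)\ln\!\left(1+\tfrac{\alpha b\gamma}{1-\alpha}\right)=\bar R_B$. The decisive step is the substitution $y\triangleq 1+\tfrac{\alpha b\gamma}{1-\alpha}$, i.e.\ the argument of the logarithm. Solving for $\alpha$ gives $\alpha=\tfrac{y-1}{b\gamma+y-1}$ and $1-\alpha=\tfrac{b\gamma}{b\gamma+y-1}$, so $y\mapsto\alpha$ is a strictly increasing bijection from $(1,\infty)$ onto $(0,1)$; in particular ``smallest $\alpha$'' means ``smallest $y>1$''. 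Substituting and clearing denominators turns the constraint into $b\gamma\ln y=\bar R_B(b\gamma+y-1)$, which after dividing by $b\gamma$ and regrouping becomes the standard Lambert form $\ln y - a y = C$ with $a\triangleq\tfrac{\bar R_B}{b\gamma}$ and $C\triangleq\bar R_B\!\left(1-\tfrac{1}{b\gamma}\right)=\bar R_B-a$.

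From $\ln y=C+ay$ I exponentiate to $y={\rm e}^{C}{\rm e}^{ay}$, multiply through by $-a\,{\rm e}^{-ay}$ to get $(-ay){\rm e}^{-ay}=-a\,{\rm e}^{C}$, and apply the Lambert function: $-ay=W\!\left(-a\,{\rm e}^{C}\right)$, i.e.\ $y=-\tfrac1a W\!\left(-\tfrac{\bar R_B}{b\gamma}{\rm e}^{\bar R_B(1-1/b\gamma)}\right)$. Substituting this into $\alpha=\tfrac{y-1}{b\gamma+y-1}$ and dividing numerator and denominator by $b\gamma$ yields exactly \eqref{eq:ProbForm13}; the remaining algebra is routine.

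Two points need care, and this is where the real work lies. First, \emph{branch selection}: the argument $-a\,{\rm e}^{C}$ is negative, so $W$ has two real values and I must show the principal branch $W_0$ is the relevant one. Setting $\phi(y)=\ln y-ay$, one has $\phi'(y)=\tfrac1y-a$, so $\phi$ increases on $(0,1/a)$ and decreases on $(1/a,\infty)$, while $\phi(1)=-a<C$ because $\bar R_B>0$. Hence $\phi(y)=C$ has at most two roots $1<y_1\le 1/a\le y_2$, and the smaller one — the one we want, via monotone decrease of $f$ — satisfies $-ay_1\in[-1,0)$, precisely the range of $W_0$ on $[-1/{\rm e},0)$, whereas $y_2$ corresponds to $W_{-1}$. (Pulling this back through the bijection $y\mapsto\alpha$ also re-derives that the ZF rate $r_B(\alpha)$ is unimodal with a well-defined smallest root.) Second, \emph{feasibility}: a real solution requires $-a\,{\rm e}^{C}\in[-1/{\rm e},0)$, equivalently $\phi(1/a)=-\ln a-1\ge C$; I would check that this coincides with the stated hypothesis $R_B\le R_B^{\rm max}$ with $R_B^{\rm max}=\max_{\alpha}(1-\alpha)\log_2(1+\tfrac{\alpha b\gamma}{1-\alpha})$ in the ZF setting, so the formula is well defined exactly when it should be. The substitution and the Lambert manipulation are mechanical; the genuine obstacle is this bookkeeping — cleanly tying ``optimal $=$ smallest feasible $\alpha$'' to ``principal branch $W_0$'' and to ``$R_B\le R_B^{\rm max}$''.
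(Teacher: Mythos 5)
Your proof follows essentially the same route as the paper's Appendix C: the substitution $y=1+\frac{\alpha}{1-\alpha}b\gamma$, reduction of the equality constraint to $\ln y - ay = C$ with $a=\bar R_B/(b\gamma)$ and $C=\bar R_B(1-1/(b\gamma))$, and inversion via the Lambert function, followed by back-substitution into $\alpha=(y-1)/(b\gamma+y-1)$. The only difference is that you additionally justify selecting the principal branch $W_0$ (by locating the smaller root $y_1\le 1/a$ and invoking the monotone decrease of $f(\alpha)$) and tie the condition $-a\,{\rm e}^{C}\ge -1/{\rm e}$ to feasibility $R_B\le R_B^{\rm max}$ --- points the paper asserts without proof --- so your argument is, if anything, slightly more complete.
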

\begin{proof}
The proof is  given in Appendix C.
\end{proof}
Since we use the performance of the HD mode as a benchmark, we end this section by making a remark on this mode.  In the HD mode, the information transmission period of $(1 - \alpha)$ is equally divided for the BS to MS and then the MS to BS communications. Moreover,  both BS and MS can employ all of their antennas for transmit and receive beamforming, as in the case of standard MIMO communication. However, in this case, the HD mode requires twice the RF chains required by the proposed FD approach which, in fact, is based on the {\it antenna conserved} (AC) condition \cite{Barghi}. Since RF chains are more expensive than the antennas, a comparison between this type of HD mode, which we refer to as {\it HD-AC} and the proposed FD methods is not fair. As such, we also consider another type of HD mode which uses the same number of the transmit and receive antennas (at each node) as in the FD mode (see Fig. 1-b), leading to the same number of RF chains. This type of HD mode is referred to as HD with {\it radio-frequency (RF) chain conserved} (RFC) condition, i.e., {\it HD-RFC}. Therefore, for the HD-AC approach, the BS and MS rate, respectively, are given by
\begin{eqnarray}
\label{eqn:HD1}
r_{B,H}&=&\frac{1-\alpha}{2}\log_2\left( 1+ \frac{\alpha}{1-\alpha}\frac{\eta P {\bar \lambda}^2(\qH_{BM}\qH^H_{BM})}{\sigma^2_b}\right)\nonumber\\
r_{M,H}&=&\frac{1-\alpha}{2}\log_2\left( 1+\frac{P {\bar \lambda}(\qH_{BM}\qH^H_{BM})}{\sigma^2_m}\right).
\end{eqnarray}
On the other hand, the respective BS and MS rates, under HD-RFC approach, are given by
 \begin{eqnarray}
\label{eqn:HD1N}
{\tilde r}_{B,H}&=&\frac{1-\alpha}{2}\log_2\left( 1+ \frac{\alpha}{1-\alpha}\frac{\eta P ||\qh_M||^2}{\sigma^2_b}\right)\nonumber\\
{\tilde r}_{M,H}&=&\frac{1-\alpha}{2}\log_2\left( 1+\frac{P||\qh_B||^2}{\sigma^2_m}\right).
\end{eqnarray}

\section{Joint Optimization with Partial CSI}
\label{sec:Section3}
In the previous section, the optimum beamformer and TS parameter are obtained by assuming that the instantaneous CSI is perfectly known. In particular, the assumption of having perfect instantaneous transmit CSI is idealistic due to the fact that each terminal,  in general, has to rely on the CSI fed back by the other terminal. In order to minimize the cost of CSI feedback, it is often preferred to pursue system design that requires only the knowledge of second-order statistics of the transmit CSI. Notice that, if the channel varies rapidly, this approach becomes somehow inevitable, since the optimal parameters designed on the basis of previously acquired CSI becomes outdated quickly \cite{SZhou}, \cite{ChaliseSuraweera2013}. With these motivations, we consider that each terminal  knows its own LI channel and receive CSI,  but  only the second-order statistics (more specifically channel covariance matrix) of the transmit CSI. Although degradation in system performance is inevitable due to partial CSI, such degradation cannot be analytically quantified. However, numerical results (not included due to space constraints) show that the performance degradation can be significant, and thus, an improved design approach is necessary to achieve a desired level of performance.

 In the first phase, the BS transmits an energy signal isotropically, and 
 the harvested energy at the MS is
 \be
    E = \frac{\alpha\eta P{\rm tr}\left(\qH_{BM}\qH_{BM}^H\right)}{N_t}.
     \ee
From the received signal  (\ref{eq:sigmod1}),  the  achievable ergodic rate at the MS  is given by
\bea
  r_M  \hspace*{-0.3cm} &= \hspace*{-0.3cm}& (1-\alpha)E_{\qh_B} \left[\log_2\left( 1 + \frac{|\qh_B^H\qw_B|^2}{\sigma_m^2+ p_m |h_m|^2}\right)\right],
\eea
whereas the outage probability at the BS,  defined as $P_{out, B}={\rm Pr}\left\{r_B \leq \gamma_B  \right\}$, is expressed as
 \bea
    P_{out, B} \hspace*{-0.3cm}&= \hspace*{-0.3cm}& {\rm Pr}\biggl\{ (1-\alpha)\log_2\biggl( 1 +  p_m \qh_M^H \biggr. \biggr. \nonumber\\
\biggl. \biggl.
& & (\sigma_b^2\qI+  \qH_B\qw_B\qw_B^H\qH_B^H )^{-1}\qh_M\biggr) \leq \gamma_B \biggr\},
  \eea
where $\gamma_B$ is a predefined threshold value for  BS information rate.  
\subsection{Problem Formulation}
  The objective is to maximize the ergodic rate of the MS, while confirming that the outage probability at the BS does not exceed a certain value, $\rho$,  where $0 \leq \rho\leq 1$.  
  This is achieved by solving the following optimization problem
 \bea
 \label{eq:Rob1}
    \max_{ \{\qw_B, p_m, \alpha\} }~  \hspace*{-0.3cm}& & \hspace*{-0.3cm} (1-\alpha)E_{\qh_B} \left[\log_2\left( 1 + \frac{|\qh_B^H\qw_B|^2}{\sigma_m^2+ p_m |h_m|^2}\right)\right]\nonumber\\
    \mbox{s.t.}~  \hspace*{-0.3cm}& & \hspace*{-0.3cm}  {\rm Pr}\biggl\{ (1-\alpha)\log_2\biggl( 1 +  p_m \qh_M^H \biggr. \biggr. \nonumber\\
\biggl. \biggl.
 \hspace*{-0.3cm}& & \hspace*{-0.3cm} (\sigma_b^2\qI+  \qH_B\qw_B\qw_B^H\qH_B^H )^{-1}\qh_M\biggr) \leq \gamma_B \biggr\} \leq \rho \nonumber\\
         \hspace*{-0.3cm}& & \hspace*{-0.3cm} p_m = E_{\qH_{BM}}\left[\frac{  \frac{\alpha\eta P{\rm tr}\left(\qH_{BM}\qH_{BM}^H\right)}{N_t} 
         }{1-\alpha}\right],\\
  \hspace*{-0.3cm}& & \hspace*{-0.3cm} ||\qw_B||^2\leq P, ~0\le \alpha \le 1, \nonumber
         \eea
where the equality constraint on $p_m$ can be further expressed as
\bea
p_m= \frac{  \frac{\alpha\eta P{\rm tr} \left(E_{\qH_{BM}}\left[\qH_{BM}\qH_{BM}^H\right]\right)}{N_t} 
         }{1-\alpha}.
\eea
 In order to solve the optimization problem in (\ref{eq:Rob1}), the ergodic  MS rate and the BS outage probability need to be
 first derived. 

\subsection{Ergodic Rate and Outage Probability}
Let $r_M=\frac{{\tilde r}_M(1-\alpha)}{\log(2)}$, where
\bea
{\tilde r}_M&=& E_{\qh_B} \left[\log\left( 1 + \frac{|\qh_B^H\qw_B|^2}{\sigma_m^2+ p_m |h_m|^2}\right)\right].
\eea
Assuming Rayleigh fading, next we derive the exact closed-form expressions for ${\tilde r}_M$ and $P_{out, B}$. \\

Let $\qh_B$ be expressed as  $\qh_B=\qR_B^{\frac{1}{2}}\qh_ {B,w}$,  where the elements of $\qh_ {B,w}$ are ZMCSCG with unit-variance,
$\qR_ B$ is the covariance matrix of $\qh_B$. Note that the effect of distance dependent attenuation is lumped into $\qR_ B$. Then,  the random variable (RV) $V \triangleq | \qh_B^H\qw_B|^2$ can be written as  $V=\qh_ {B,w}^H \qR_ {B}^ {\frac {1} {2}} \qw_B \qw_B^H\qR_ {B}^ {\frac {1} {2}} \qh_{B,w}=\qh_ {B,w}^H{\bar \qU}{\bar {\boldsymbol \Lambda}}{\bar \qU}^H \qh_ {B,w}$, where  ${\bar \qU}$ is the matrix of eigenvectors and ${\bar {\boldsymbol \Lambda}}$ is a diagonal matrix of eigenvalues of  the matrix $\qR_ {B}^ {\frac {1} {2}} \qw_B \qw_B^H\qR_ {B}^ {\frac {1} {2}}$. Since it is rank-one matrix, only one diagonal element of  ${\bar {\boldsymbol \Lambda}}$ is non-zero, which is  $\lambda\triangleq || \qR_ {B}^ {\frac {1} {2}} \qw_B \qw_B^H\qR_ {B}^ {\frac {1} {2}}||$. Hence, $V=\lambda  |  {\tilde h}_ {B,w,n}|^2$,  where ${\tilde h}_ {B,w,n}$ is the element of $ {\bar {\bf U}}^H\qh_ {B,w}$ corresponding to  the non-zero eigenvalue of $\qR_ {B}^ {\frac {1} {2}} \qw_B \qw_B^H\qR_ {B}^ {\frac {1} {2}}$ (i.e., $\lambda$). Since  $|  {\tilde h}_ {B,w,n}|^2$ is an exponentially distributed RV with unit parameter,
the probability density function (PDF) of $V$ is given by
\bea
f_V(v)=\frac{1}{\lambda}{\rm e}^{-\frac{v}{\lambda}}.
 \eea 
Let $c_f=\frac{1}{\sigma_m^2+p_m|h_m|^2}$ and $ {\bar V}=c_fV$. Then, the PDF of ${\bar V}$ is given by
\bea
\label{eq:pdf1}
f_{\bar V}( {\bar v} )=\frac{1} {\lambda c_f}  {\rm e}^ {-\frac{\bar v}{\lambda c_f} }.  
 \eea 
Using the PDF of (\ref{eq:pdf1}), we get ${\tilde r}_M\triangleq {\rm E}\left\{\log(1+{\bar V})\right\}$ as 
\bea
\hspace*{-0.5cm}{\tilde r}_M\hspace*{-0.3cm}&=\hspace*{-0.3cm}&\frac{1}{\lambda c_f} \int_{0}^{\infty}\log\left(1+ {\bar v}\right) {\rm e}^ {-\frac{\bar v}{\lambda c_f} } d{\bar v}\nonumber\\
\hspace*{-0.3cm}&=\hspace*{-0.3cm}& {\rm e}^{ \frac{\sigma_m^2+p_m|h_m|^2}{{\bf w}_B^H\qR_B\qw_B}}{\rm E}_1\left(   \frac{\sigma_m^2+p_m|h_m|^2}{{\bf w}_B^H\qR_B\qw_B}\right),
\eea
where we use \cite[eqns. 4.331.2, 8.211.1]{Ryzhik} and $E_1(\cdot)$ is the exponential integral \cite[p. 228]{Stegun}.\\
\hspace*{-0.3cm}\begin{proposition}
\label{ProofPropositionExact_RateB}
A closed-form expression for $P_{out, B}$ is given by
\bea
P_{out, B}=\sum_{i=1}^{L} a_i\lambda_i\left[ 1- {\rm e}^{-\frac{{\bar \gamma}}{\lambda_i}}\right],
\eea
where
\bea
{\bar \gamma} & = & \frac{1}{p_m} \left[ 2^{\frac{\gamma_B}{1-\alpha}}-1 \right],\nonumber\\
a_i &= &\frac{\lambda_i^{L-2}}{\prod_{j=1, j\neq i}^{L} (\lambda_i-\lambda_j)},
\eea
and  $\left\{\lambda_i\right\}_{i=1}^{L}$ are the $L$ distinct eigenvalues of the matrix
\bea
{\boldsymbol \Phi}=\qR_M^{\frac{1}{2}}\left(\sigma_b^2\qI+\qH_B\qw_B\qw_B^H\qH_B^H\right)^{-1}\qR_M^{\frac{1}{2}},
\eea
with $\qR_M$  being the covariance matrix of $\qh_M$,  i.e., $\qh_M=\qR_M^{\frac{1}{2}}\qh_{M,w}$, and the elements of $\qh_{M,w}$ are ZMCSCG. 
\end{proposition}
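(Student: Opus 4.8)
The plan is to reduce the outage event to the lower-tail event of a Hermitian quadratic form in a standard complex Gaussian vector, diagonalize it, and then invoke the classical partial-fraction characterization of a weighted sum of independent unit-mean exponentials.

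First I would rewrite the defining event. For $\alpha\in(0,1)$ the inequality $(1-\alpha)\log_2(1+x)\le\gamma_B$ is equivalent to $x\le 2^{\gamma_B/(1-\alpha)}-1$, so $P_{out,B}={\rm Pr}\{\qh_M^H(\sigma_b^2\qI+\qH_B\qw_B\qw_B^H\qH_B^H)^{-1}\qh_M\le{\bar\gamma}\}$ with ${\bar\gamma}$ as stated; here $p_m$, $\alpha$, $\qw_B$ and $\qH_B$ are all fixed (the LI channel and the chosen design are known; only $\qh_M$ is random). Substituting $\qh_M=\qR_M^{1/2}\qh_{M,w}$ turns the quadratic form into $\qh_{M,w}^H\boldsymbol\Phi\qh_{M,w}$ with $\boldsymbol\Phi$ exactly the matrix in the statement, and since $\sigma_b^2\qI+\qH_B\qw_B\qw_B^H\qH_B^H\succ 0$ the matrix $\boldsymbol\Phi$ is positive semidefinite of rank ${\rm rank}(\qR_M)$.

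Next I would diagonalize $\boldsymbol\Phi=\qU\,{\rm diag}(\lambda_1,\dots,\lambda_L,0,\dots,0)\,\qU^H$, where $\lambda_1,\dots,\lambda_L$ are its nonzero eigenvalues. Because $\qh_{M,w}$ is ZMCSCG with identity covariance, $\qU^H\qh_{M,w}$ has the same distribution, so $Y\triangleq\qh_{M,w}^H\boldsymbol\Phi\qh_{M,w}$ is distributed as $\sum_{i=1}^L\lambda_i|g_i|^2$ with the $g_i$ i.i.d.\ ZMCSCG; hence the summands are independent and each $|g_i|^2$ is unit-mean exponential. Assuming (generically) that the $\lambda_i$ are distinct, the moment generating function $\prod_i(1-s\lambda_i)^{-1}$ has the partial-fraction expansion $\sum_i A_i(1-s\lambda_i)^{-1}$ with $A_i=\prod_{j\ne i}(1-\lambda_j/\lambda_i)^{-1}=\lambda_i^{L-1}/\prod_{j\ne i}(\lambda_i-\lambda_j)$; inverting term by term gives the density $f_Y(y)=\sum_i (A_i/\lambda_i)\,{\rm e}^{-y/\lambda_i}=\sum_i a_i\,{\rm e}^{-y/\lambda_i}$ with $a_i$ as in the statement (equivalently, this is the standard density of a sum of weighted exponentials). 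Integrating, $P_{out,B}={\rm Pr}\{Y\le{\bar\gamma}\}=\sum_i a_i\int_0^{\bar\gamma}{\rm e}^{-y/\lambda_i}\,dy=\sum_i a_i\lambda_i(1-{\rm e}^{-{\bar\gamma}/\lambda_i})$, which is the claimed formula.

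The main obstacle is the genericity of the spectrum of $\boldsymbol\Phi$: the partial-fraction step needs the nonzero eigenvalues to be simple (a repeated eigenvalue would force the derivative form of the expansion, introducing polynomial-times-exponential terms), and one must sum only over the $L$ distinct nonzero eigenvalues — any zero eigenvalue of $\boldsymbol\Phi$, e.g.\ when $\qR_M$ is rank deficient, drops out of the quadratic form and does not contribute. For continuously distributed channel and covariance data the nonzero eigenvalues are almost surely distinct, so the stated expression holds in the generic case, and the degenerate cases follow by a continuity argument.
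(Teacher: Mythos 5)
Your proof is correct and follows essentially the same route as the paper's Appendix D: reduce the outage event to $\Pr\{\qh_{M,w}^H{\boldsymbol \Phi}\qh_{M,w}\le{\bar \gamma}\}$, diagonalize ${\boldsymbol \Phi}$ using the unitary invariance of the ZMCSCG vector, and integrate the density $\sum_i a_i{\rm e}^{-x/\lambda_i}$ of the resulting weighted sum of unit-mean exponentials. The only difference is that you derive that density yourself via the partial-fraction expansion of the moment generating function (and add the remark on the genericity of distinct eigenvalues), whereas the paper simply cites the known result from Cox's \emph{Renewal Theory}.
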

\begin{proof}
The proof is  given in Appendix D.
\end{proof}

\subsection{Optimization}
 With the derived expressions for $r_M$ and $P_{out, B}$, the objective is to maximize $r_M$ while keeping $P_{out, B}$ less than a certain value $\rho$. This is mathematically expressed as
\bea
\label{FCSI_eq1New}
  \max\limits_{\{{\bf w}_B, \alpha\}}  \hspace*{-0.3cm}& & \hspace*{-0.3cm} \frac{1-\alpha}{\log(2)} {\rm e}^{ \frac{\sigma_m^2+p_m|h_m|^2}{{\bf w}_B^H\qR_B\qw_B}}{\rm E}_1\left(   \frac{\sigma_m^2+p_m|h_m|^2}{{\bf w}_B^H\qR_B\qw_B}  \right) \nonumber\\
 \mbox{s.t.}~ \hspace*{-0.3cm}& & \hspace*{-0.3cm} \sum_{i=1}^{L}a_i\lambda_i\left[ 1- {\rm e}^{-\frac{{\bar \gamma}}{\lambda_i}}\right]  \leq \rho, \nonumber\\
 \hspace*{-0.3cm}& & \hspace*{-0.3cm} ||\qw_B||^2\leq P,~~0\le\alpha\le 1,
\eea 
which is a very difficult optimization problem since $\{\lambda_i\}$ are eigenvalues of $\qR_M^{\frac{1}{2}}\left(\sigma_b^2\qI+\qH_B\qw_B\qw_B^H\qH_B^H\right)^{-1}\qR_M^{\frac{1}{2}}$ and $\{a_i\}$ are complicated functions of $\{\lambda_i\}$. 
In order to solve the problem (\ref{FCSI_eq1New}), we first prove the following lemma. 
\begin{lemma}
Let $f(x)={\rm e}^{x}{\rm E}_1(x)$ with $x\geq 0$. Then, $f(x)$ is a monotonically decreasing function of $x$. 
\end{lemma}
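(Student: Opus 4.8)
The plan is to reduce the statement to the elementary inequality $E_1(x) < e^{-x}/x$ for $x>0$ and then to verify that inequality by a one-line bound on the defining integral. Recall $E_1(x)=\int_x^\infty \frac{e^{-t}}{t}\,dt$, so that $E_1'(x)=-e^{-x}/x$. Differentiating $f(x)=e^x E_1(x)$ by the product rule gives
$$f'(x)=e^x E_1(x)-\frac1x,$$
and hence $f'(x)<0$ for all $x>0$ precisely when $E_1(x)<e^{-x}/x$. Continuity of $f$ on $(0,\infty)$ then upgrades the strict sign of the derivative to strict monotone decrease; at the left endpoint $E_1(0^+)=+\infty$, so $f(0^+)=+\infty$ and the conclusion ``decreasing on $[0,\infty)$'' holds with the obvious reading.

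First I would prove the bound directly: for $t>x>0$ one has $1/t<1/x$, so
$$E_1(x)=\int_x^\infty \frac{e^{-t}}{t}\,dt<\frac1x\int_x^\infty e^{-t}\,dt=\frac{e^{-x}}{x},$$
the inequality being strict since the pointwise comparison of integrands is strict on a set of positive measure. Substituting into the formula for $f'(x)$ finishes the argument.

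As an alternative route that sidesteps differentiating the improper integral, I would use the change of variables $t=x(1+u)$ to write $E_1(x)=e^{-x}\int_0^\infty \frac{e^{-xu}}{1+u}\,du$, whence $f(x)=\int_0^\infty \frac{e^{-xu}}{1+u}\,du$. The integrand is, for every fixed $u>0$, strictly decreasing in $x$, so $f$ is strictly decreasing in $x$; differentiating under the integral sign (justified on each interval $x\ge x_0>0$ by the integrable dominating function $e^{-x_0 u}$) gives $f'(x)=-\int_0^\infty \frac{u\,e^{-xu}}{1+u}\,du<0$ as well.

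This is a standard property of the exponential integral, so I do not anticipate any genuine obstacle; the only points needing a sentence of care are the blow-up of $f$ at $x=0$ (so that ``monotonically decreasing'' must be read with $f(0^+)=+\infty$) and, in the second route, the routine justification of differentiation under the integral sign on compact subsets of $(0,\infty)$.
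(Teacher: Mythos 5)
Your proposal is correct. Your primary route is essentially the paper's: both compute $f'(x)=e^{x}E_1(x)-\tfrac{1}{x}$ from $E_1'(x)=-e^{-x}/x$ and then show negativity via an upper bound on $E_1$. The difference is in how that bound is obtained. The paper invokes the handbook inequality $E_1(x)\le e^{-x}\log\left(1+\tfrac{1}{x}\right)$ (Abramowitz--Stegun) and then uses $\log\left(1+\tfrac{1}{x}\right)\le \tfrac{1}{x}$, whereas you prove the (equivalent for this purpose, and slightly weaker) bound $E_1(x)<e^{-x}/x$ in one line directly from the defining integral by replacing $1/t$ with $1/x$. This makes your argument self-contained where the paper's relies on a citation, at no cost in length. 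Your second route --- substituting $t=x(1+u)$ to get $f(x)=\int_0^\infty \frac{e^{-xu}}{1+u}\,du$ and reading off strict monotonicity from the integrand --- is genuinely different from anything in the paper and is arguably the cleanest of the three, since it avoids bounding $E_1$ altogether and even dispenses with differentiation if one only wants monotone decrease. Your remark about the blow-up at $x=0^+$ is a minor point of care the paper glosses over; it does not affect the use of the lemma in the optimization argument.
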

\begin{proof}
The first-order derivative of $f(x)$ is given by
\bea
\label{FCSI_eq3}
\frac{d f(x)}{d x}={\rm e}^{x}{\rm E}_1(x)-{\rm e}^{x}{\rm E}_0(x)={\rm e}^{x}{\rm E}_1(x)-\frac{1}{x},
\eea
where we use that \cite[p. 230]{Stegun}
\bea
\label{FCSI_eq4}
\frac{d {\rm E}_1(x)}{d x}=-{\rm E}_0(x)=-\frac{{\rm e}^{-x}}{x}.
\eea
On the other hand, ${\rm E}_1(x)$ can be lower bounded as  \cite[p. 229]{Stegun}
\bea
\label{FCSI_eq5}
{\rm E}_1(x)\le {\rm e}^{-x}\log\left( 1+\frac{1}{x}\right), x \ge 0.  
\eea
Applying (\ref{FCSI_eq5}) in (\ref{FCSI_eq3}) leads to
\bea
\label{FCSI_eq6}
\frac{d f(x)}{d x} \le \log\left( 1+\frac{1}{x}\right)-\frac{1}{x} \le 0,
\eea
since $\log\left( 1+\frac{1}{x}\right) \le \frac{1}{x}$. Therefore, ${\rm e}^{x}{\rm E}_1(x)$ is a monotonically decreasing function of $x$. 
\end{proof}
Applying  Lemma 1 to (\ref{FCSI_eq1New}), it is evident that the objective function  monotonically decreases with $\frac{\sigma_m^2+p_m|h_m|^2}{\qw_B^H\qR_B\qw}$. For a given $\alpha$ and known $|h_m|^2$, maximizing the objective function is equivalent to minimizing  $\frac{1}{\qw_B^H\qR_B\qw}$. Consequently, for a given $\alpha$, the optimization problem  (\ref{FCSI_eq1New}) can be expressed as
\bea
\label{FCSI_eq2New}
 \min\limits_{{\bf w}_B}~\hspace*{-0.3cm}& \hspace*{-0.3cm}& \frac{1}{{\bf w}_B^H\qR_B\qw_B} \nonumber\\
\mbox{s.t.}~\hspace*{-0.3cm}& \hspace*{-0.3cm}&\sum_{i=1}^{L}a_i\lambda_i \left[ 1- {\rm e}^{-\frac{{\bar \gamma}}{\lambda_i}}\right]  \leq \rho, \nonumber\\
\hspace*{-0.3cm}& \hspace*{-0.3cm}& ||\qw_B||^2\leq P.
\eea 
However, the optimization problem in (\ref{FCSI_eq2New}) is still not tractable,  due to the complicated constraint on the outage probability. As such, we derive an upper bound for $P_{out, B}$. This upper bound can be tightened by optimizing one of the parameters that will be clear in the sequel.  Notice that,  from  Appendix D,  $P_{out,B}$ is expressed as
\bea
\label{FCSI_eq7A}
P_{out, B}={\rm Pr}\left\{ \sum_{i=1}^{L} \lambda_i|{\tilde h}_{M, w, i}|^2 \leq {\bar \gamma}  \right\},
\eea
where  $|{\tilde h}_{M, w, i}|^2$ is an exponentially distributed RV with unit parameter. Applying Chernoff's bound \cite{Simon}, (\ref{FCSI_eq7A}) can be upper bounded as
\bea
\label{FCSI_eq7}
P_{out, B} & \leq &  {\rm E}_{{\tilde h}_{M, w}}\left[ {\rm e}^{-\beta \left(\sum_{i=1}^{L} \lambda_i |{\tilde h}_{M, w, i}|^2 -{\bar \gamma}\right)}\right]\nonumber\\
&=& {\rm e}^{\beta {\bar \gamma}} {\rm E}_{{\tilde h}_{M, w}}\left[ {\rm e}^{-\beta \sum_{i=1}^{L} \lambda_i |{\tilde h}_{M, w, i}|^2 }\right],
\eea
where $\beta\geq 0$ and ${\rm E}_{{\tilde h}_{M, w}}$ denotes mathematical expectation w.r.t. the random variables $\{{\tilde h}_{M, w, i}\}$. Since these variables are independent,
\bea
\label{FCSI_eq8}
P_{out, B} & \leq & {\rm e}^{\beta {\bar \gamma}}  \prod_{i=1}^{L} {\rm E}_{{\tilde h}_{M, w,i}}\left[ {\rm e}^{-\beta \lambda_i |{\tilde h}_{M, w, i}|^2 }\right]\nonumber\\
&=& {\rm e}^{\beta {\bar \gamma}}  \prod_{i=1}^{L} \frac{1}{1+\beta \lambda_i},
\eea
where in the second step we utilize the fact that  the PDF of $Y=|{\tilde h}_{M, w, i}|^2$ is given by $f_Y(y)=e^{-y}$. Since $\{\lambda_i\}$ are the eigenvalues of ${\boldsymbol \Phi}$, (\ref{FCSI_eq8}) is readily expressed as
\bea
\label{FCSI_eq9}
P_{out, B} & \leq &  \frac{{\rm e}^{\beta {\bar \gamma}} }{{\rm det}({\bf I}+\beta {\boldsymbol \Phi})}.
\eea
Let  ${\tilde P}_{out, B}=\frac{{\rm e}^{\beta {\bar \gamma}} }{{\rm det}({\bf I}+\beta {\boldsymbol \Phi})}$. The gap between $P_{out, B}$ and  ${\tilde P}_{out, B}$ can be minimized by computing $\min\limits_{\beta\geq 0} {\tilde P}_{out, B}$. With these results, the optimization problem (\ref{FCSI_eq2New}) for a given $\alpha$ is given by
\bea
\label{FCSI_eq3New}
  \min\limits_{{\bf w}_B}~ \hspace*{-0.3cm}& \hspace*{-0.3cm}& \frac{1}{{\bf w}_B^H\qR_B\qw_B} \nonumber\\
 \mbox{s.t.}~\hspace*{-0.3cm}& \hspace*{-0.3cm}& \min\limits_{\beta\geq 0} \frac{{\rm e}^{\beta {\bar \gamma}} }{{\rm det}({\bf I}+\beta \qR_M\left(\sigma_b^2\qI+\qH_B\qw_B\qw_B^H\qH_B^H\right)^{-1} )}  \leq \rho, \nonumber\\
\hspace*{-0.3cm}& \hspace*{-0.3cm}& ||\qw_B||^2\leq P,
\eea 
which can be equivalently expressed as
\bea
\label{FCSI_eq4New}
  \min\limits_{{\bf w}_B, \beta\geq 0}~\hspace*{-0.3cm}& \hspace*{-0.3cm}& \frac{1}{{\bf w}_B^H\qR_B\qw_B} \nonumber\\
\mbox{s.t.}~ \hspace*{-0.3cm}& \hspace*{-0.3cm}& \rho {\rm det}({\bf I}+\beta \qR_M\left(\sigma_b^2\qI+\qH_B\qw_B\qw_B^H\qH_B^H\right)^{-1} ) \geq {\rm e}^{\beta {\bar \gamma}}, \nonumber\\
\hspace*{-0.3cm}& \hspace*{-0.3cm}& ||\qw_B||^2\leq P.
\eea 
Using Sherman-Morrison formula \cite{Hager}, we obtain
\bea
\label{FCSI_eq5New}
& &\hspace*{-2cm}  \qR_M^{\frac{1}{2}}\left[ \sigma_b^2\qI+\qH_B\qw_B\qw_B^H\qH_B^H\right]^{-1}\qR_M^{\frac{1}{2}}=\nonumber\\
& &  \frac{\qR_M}{\sigma_b^2}-\frac{1}{\sigma_b^2}\frac{ {\tilde {\bf a}}{\tilde {\bf a}}^H }{\sigma_b^2+||\qH_B\qw_B||^2},
\eea 
where ${\tilde {\bf a}}=\qR_M^{\frac{1}{2}}\qH_B\qw_B$. Substituting (\ref{FCSI_eq5New}) into ${\tilde f}\triangleq \rho {\rm det}({\bf I}+\beta \qR_M^{\frac{1}{2}}\left(\sigma_b^2\qI+\qH_B\qw_B\qw_B^H\qH_B^H\right)^{-1}\qR_M^{\frac{1}{2}} )$, ${\tilde f}$ can be expressed as
\bea
\label{FCSI_eq6New}
{\tilde f}=\rho{\rm det}\left[ \frac{(\qI+\frac{\beta}{\sigma_b^2} \qR_M)(\sigma_b^2+||\qH_B\qw_B||^2)-\frac{\beta}{\sigma_b^2}  {\tilde {\bf a}}{\tilde {\bf a}}^H}{\sigma_b^2+||\qH_B\qw_B||^2 }\right],
\eea
which, due to the fact that ${\rm det}(c{\bf A})=c^n{\rm det}({\bf A})$ for any matrix ${\bf A}$ of size $n\times n$, is expressed as
\bea
\label{FCSI_eq7New}
{\tilde f}=\frac{\rho{\rm det}\left( (\qI+\frac{\beta}{\sigma_b^2}\qR_M)(\sigma_b^2+||\qH_B\qw_B||^2)-\frac{\beta}{\sigma_b^2}  {\tilde {\bf a}}{\tilde {\bf a}}^H   \right) }{(\sigma_b^2+||\qH_B\qw_B||^2)^{N_r}}.\nonumber
\eea 
Thus, the optimization problem in (\ref{FCSI_eq4New}) is expressed as
\bea
\label{FCSI_eq8New}
  \min\limits_{{\bf w}_B, \beta\geq 0} ~\hspace*{-0.3cm}& \hspace*{-0.3cm}& \frac{1}{{\bf w}_B^H\qR_B\qw_B} \nonumber\\
\mbox{s.t.}~\hspace*{-0.3cm}& \hspace*{-0.3cm}& \rho{\rm det}\biggl(  (\qI+\frac{\beta}{\sigma_b^2}\qR_M)(\sigma_b^2+||\qH_B\qw_B||^2)- \biggr. \biggr.\nonumber\\ 
\hspace*{-0.3cm}& \hspace*{-0.3cm}&   \frac{\beta}{\sigma_b^2}{\tilde {\bf a}}{\tilde {\bf a}}^H \biggr) \geq {\rm e}^{\beta {\bar \gamma}} (\sigma_b^2+||\qH_B\qw_B||^2)^{N_r}, \nonumber\\
\hspace*{-0.3cm}& \hspace*{-0.3cm}& ||\qw_B||^2\leq P, ~{\tilde {\bf a}}=\qR_M^{\frac{1}{2}}\qH_B\qw_B,
\eea 
which can be also written as
\bea
\label{FCSI_eq9New}
 \min\limits_{{\bf w}_B, \beta\geq 0}~\hspace*{-0.3cm}& \hspace*{-0.3cm}& \frac{1}{{\bf w}_B^H\qR_B\qw_B} \nonumber\\
\mbox{s.t.}~\hspace*{-0.3cm}& \hspace*{-0.3cm}& \biggl[ {\rm det}\biggl(  (\qI+\frac{\beta}{\sigma_b^2}\qR_M)(\sigma_b^2+||\qH_B\qw_B||^2)- \biggr.\biggr. \nonumber \\ 
\hspace*{-0.3cm}& \hspace*{-0.3cm}& \frac{\beta}{\sigma_b^2} {\tilde {\bf a}}{\tilde {\bf a}}^H \biggr)\biggr]^{\frac{1}{N_r}} \geq \rho^{-\frac{1}{N_r}} {\rm e}^{\frac{\beta {\bar \gamma}}{N_r}}  (\sigma_b^2+||\qH_B\qw_B||^2), \nonumber\\
\hspace*{-0.3cm}& \hspace*{-0.3cm}& ||\qw_B||^2\leq P, ~{\tilde {\bf a}}=\qR_M^{\frac{1}{2}}\qH_B\qw_B.
\eea 
The optimization problem (\ref{FCSI_eq9New}) is still non-convex, even for a given $\beta$. Introducing, $\qW_B=\qw_B\qw_B^H, \qW_B\succeq 0$ and relaxing the rank-one constraint of $\qW_B$, we obtain the following optimization problem
\bea
\label{FCSI_eq10New}
 \min\limits_{{\bf W}_B, \beta, \alpha} ~\hspace*{-0.3cm}& \hspace*{-0.3cm}& \frac{1}{ {\rm tr}( \qR_B\qW_B)}\nonumber\\
\mbox{s.t.}~\hspace*{-0.3cm}& \hspace*{-0.3cm}& \biggl[{\rm det}\biggl( (\qI+\frac{\beta}{\sigma_b^2}\qR_M)(\sigma_b^2+{\rm tr}(\qW_B\qH_B^H\qH_B))-\biggr. \biggr. \nonumber\\  & & \biggl. \biggl. \frac{\beta}{\sigma_b^2}{\tilde {\bf A}} \biggr)\biggr]^{\frac{1}{N_r}}  \geq \rho^{-\frac{1}{N_r}}  {\rm e}^{\frac{\beta {\bar \gamma}}{N_r}} (\sigma_b^2+{\rm tr}(\qW_B \qH_B^H\qH_B)) \nonumber\\
\hspace*{-0.3cm}& \hspace*{-0.3cm}& {\rm tr}(\qW_B)\leq P,~{\tilde {\bf A}}=\qR_M^{\frac{1}{2}}\qH_B\qW_B\qH_B^H\qR_M^{\frac{1}{2}},\\
\hspace*{-0.3cm}& \hspace*{-0.3cm}& {\bf W}_B\succeq 0, ~\beta\geq 0, ~0\le \alpha\le 1,\nonumber
\eea 
which is a convex optimization problem for a given $\alpha$ and $\beta$. This problem can be solved using the toolbox CVX \cite{Boyd}. In contrast to the perfect CSI case, the optimum solution of ${\bf W}_B$ in (\ref{FCSI_eq10New}) cannot be analytically guaranteed to be rank-one. If the optimum ${\bf W}_B$  is not rank-one, approximate rank-one solutions can be obtained using the randomization methods \cite{BatuLuc}. However, in all  simulation examples considered in Section \ref{sec:Section3}, we have not encountered a case in which the rank of the optimum  ${\bf W}_B$  is not rank-one{\footnote{Since we optimize $\beta$ to minimize the gap between exact outage probability and its upper bound, whenever  ${\bf W}_B$  is rank-one,  the solutions of  (\ref{FCSI_eq10New}) will also be the solutions of the original problem. This suggests that the proposed method gives close to optimum solutions. A more systematic way of its verification is an interesting work, but demands a significant level of a new task that is beyond the scope of this paper.}}. Note that  (\ref{FCSI_eq10New})  is not jointly convex w.r.t. $\alpha$, $\beta$, and $\qW_B$. A two-dimensional search over $\alpha$ and $\beta$ is required for solving this problem. However, we show that the required search space can be reduced. First note that 
\bea
\label{FCSI_eq11New}
\min\limits_{\beta>0} \frac{{\rm e}^{\beta {\bar \gamma}}}{{\rm det}(\qI+\beta {\boldsymbol \Phi})}=\min\limits_{\beta>0}{\rm e}^{\beta {\bar \gamma}-\log{\rm det}(\qI+\beta {\boldsymbol \Phi})},
\eea
which means that the exponent of the above function can be minimized. As such, the derivative of this exponent w.r.t. $\beta$ is
\bea
\label{FCSI_eq12New}
\frac{\partial [\beta {\bar \gamma}-\log{\rm det}(\qI+\beta {\boldsymbol \Phi})]}{\partial \beta}={\bar \gamma}-{\rm tr}( (\qI+\beta {\boldsymbol \Phi})^{-1}{\boldsymbol \Phi}).
\eea
After equating (\ref{FCSI_eq12New}) to zero, we get
\bea
\label{FCSI_eq13New}
{\bar \gamma}={\rm tr}( (\qI+\beta {\boldsymbol \Phi})^{-1}{\boldsymbol \Phi})=\sum_{i=1}^{L} \frac{\lambda_i}{1+\beta \lambda_i}.
\eea
Let $\beta^{*}$ be the solution of $\beta$ in (\ref{FCSI_eq13New}). Then, it is clear that ${\bar \gamma}\leq \frac{L}{\beta^*}$, i.e., $\beta^{*}\leq \frac{L}{\bar \gamma}$. Thus, the search space for $\beta$ can be confined to $[0, \frac{L}{ {\bar\gamma}}]$. On the other hand,  $\alpha$ takes only values between $0$ and $1$. Thus, the optimization problem (\ref{FCSI_eq10New}) can be efficiently solved. Since (\ref{FCSI_eq10New}) is an SDR problem for given $\alpha$ and $\beta$ and infeasibility conditions can be checked before calling the CVX routine, the worst-case computational complexity of this problem is approximately given $N_{\alpha}N_{\beta} {\mathcal O}\left(N_t^{4.5} \log\left(\frac{1}{\epsilon}\right)\right)$ where  $N_{\alpha}N_{\beta} $ denotes the number of points of the two-dimensional grid over $\alpha$ and $\beta$.

We end this subsection with the following remarks. In the case of MS with more than two antennas, in addition to the BS beamformer optimization, the joint MS receive and transmit beamformer optimization problem can be considered, which can be equivalently formulated in terms of the MS transmit beamformer. Then the optimization algorithms proposed in this paper can be applied with some minor modifications. In particular, sub-optimum solutions can be obtained by using alternating optimization method. More specifically, for a given $\alpha$, the BS transmit beamformer can be optimized while fixing the MS transmit beamformer, whereas the latter can be optimized by fixing the former.  The optimum $\alpha$ can then be obtained via one-dimensional search over $\alpha$.
 
\section{Numerical Results and Discussion}
\label{sec:Section4}
In this section, numerical results are presented for both full and partial CSI cases. More specifically, in the former case, the  MS-BS rate regions obtained from the optimum ({\bf Algorithm 1}) and sub-optimum methods ((\ref{eq:ProbForm8}) and (\ref{eq:ProbForm13})) are compared. In the partial CSI case, the ergodic MS rate versus the BS outage probability region is obtained by using the proposed method (\ref{FCSI_eq10New}). In both cases, the performance of the HD approach is also shown as a benchmark. In all simulation results, we take $\eta=0.5$, $N=6$, change the value of $N_t$, and set $P$ to $0$ dBm and $10$ dBm. The distance between the BS and MS is set to 10 meters, whereas the path loss exponent is taken as $3$. Note that, in a typical FD system, the digital cancellation scheme should be able to cancel at least 50 dB of LI power \cite{DineshKatti}. Considering this, we take $\sigma^2_{h_b}=\sigma^2_{h_m}=30$ dBm and $\sigma_b^2=\sigma_m^2=-70$ dBm, so that the LI at the BS has to be cancelled by $80$ dB when $P=10$ dBm and $70$ dB when $P=0$ dBm.
\subsection{Full CSI}
\vspace*{-0.1cm}
In this case, the channel coefficients for all channels are taken as ZMCSCG RVs. All results correspond to averaging of $100$ independent channel realizations. The BS rate is varied from $0$ to $R_B^{\rm max}$, where $R_B^{\rm max}$ is computed as in Appendix A. The HD-AC and HD-RFC schemes of the HD mode are compared with the FD mode. Fig. \ref{fig1:rate:region} shows the rate regions obtained with the optimum and sub-optimum methods for $N_t=4$ and $5$, when $P=0$ dBm, whereas the corresponding regions for $P=10$ dBm are shown in  Fig. \ref{fig2:rate:region}.  As a benchmark, the achieved BS-MS rate regions are also shown for the HD mode. 
\begin{figure}[t]
  \centering
	\includegraphics[width=0.9\columnwidth]{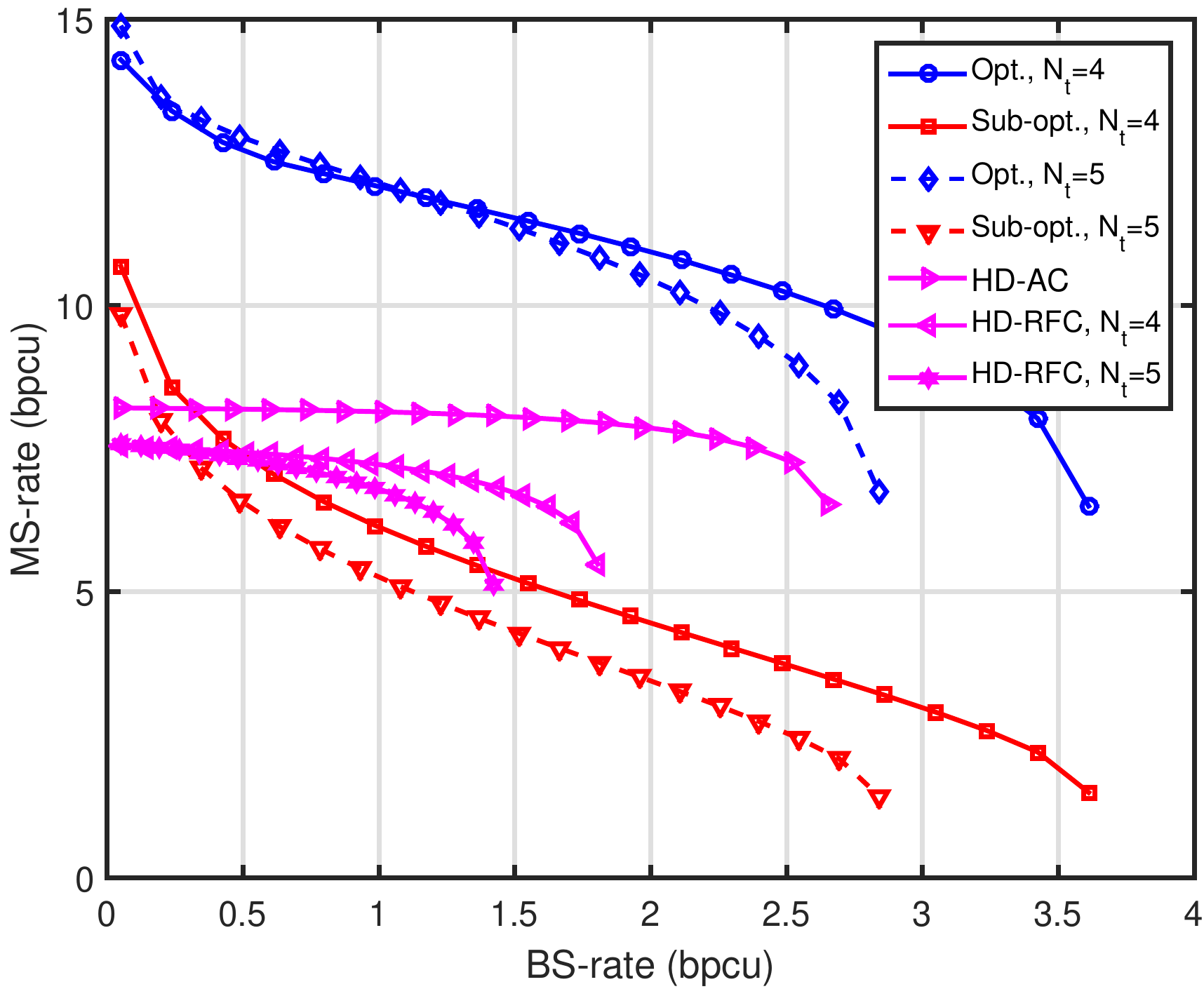}
\vspace*{-0.3cm}
 \caption{Comparison of rate regions with $P=0$ dBm, $\sigma^2_{h_b}=\sigma^2_{h_m}=30$ dBm and $N_t=4, 5$.}
\label{fig1:rate:region}
\end{figure}
It can be observed from Figs.~\ref{fig1:rate:region} and \ref{fig2:rate:region} that the maximum value of the MS rate (in bits per channel use (bpcu)) is obtained when $R_B$ is minimum, whereas the minimum value is obtained when $R_B$ takes maximum value. Moreover, as expected both the BS and MS rates increase when $P$ increases from $0$ dBm to $10$ dBm. Both figures show that the optimum method performs significantly better than the sub-optimum approach. In the proposed optimum method, when $N_t$ increases, the obtained maximum MS rate increases, whereas the obtained maximum BS rate decreases. This can be explained from the fact that increasing $N_t$ improves the transmit beamforming at the BS, which in turn is attributed for an increase in the MS rate. However, increase in $N_t$ decreases $N_r=N-N_t$ for a given $N_t$. This means that the LI rejection capability of the BS decreases which leads to a drop in the supported BS rate. All results also show that the proposed optimum method significantly outperforms HD modes that employ both AC and RFC approaches.  It is worthwhile to note that the boundaries of the MS-BS rate-regions remain relatively flat in the HD modes, although the corresponding maximum values of the BS and MS rates are smaller than those in the optimum and sub-optimum cases. When compared to the less expensive HD-RFC scheme, the sub-optimum method can be considered to provide a more flexible design. For example, in Fig. \ref{fig1:rate:region},  the HD-RFC, with $N_t=4$, gives the maximum BS-rate of about 1.8 bpcu. The corresponding MS-rate is about 5.5 bpcu which drops to zero beyond the BS-rate of 1.8 bpcu. However, the corresponding sub-optimum scheme supports the BS-rate up to 3.6 bpcu, although this is achieved with the MS-rate of only about 1.5 bpcu.
\begin{figure}[tb!]
  \centering
	 \includegraphics[width=0.9\columnwidth]{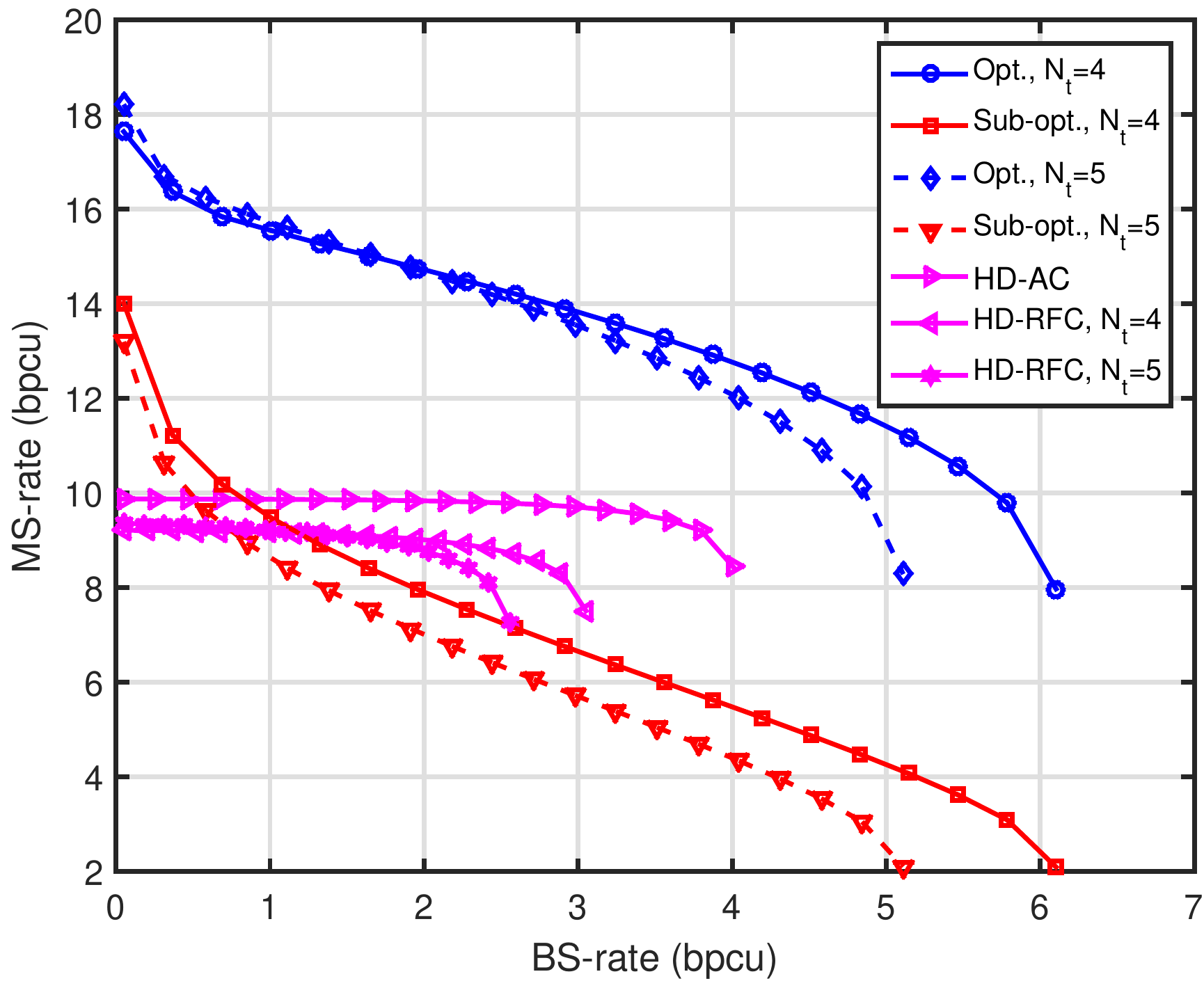}
\vspace*{-0.3cm}
 \caption{Comparison of rate regions with $P=10$ dBm,  $\sigma^2_{h_b}=\sigma^2_{h_m}=30$ dBm, and $N_t=4, 5$.}
\label{fig2:rate:region}
\end{figure}

The  rate regions of the optimum and sub-optimum methods with different values of $N_t$ are shown in Fig. \ref{fig3:rate:region} and  Fig. \ref{fig4:rate:region} for $P=0$ dBm and $P=10$ dBm, respectively. From these figures, similar observations can be made as in Figs. \ref{fig1:rate:region} and  \ref{fig2:rate:region}.
\begin{figure}[tb!]
  \centering
	\includegraphics[width=0.9\columnwidth]{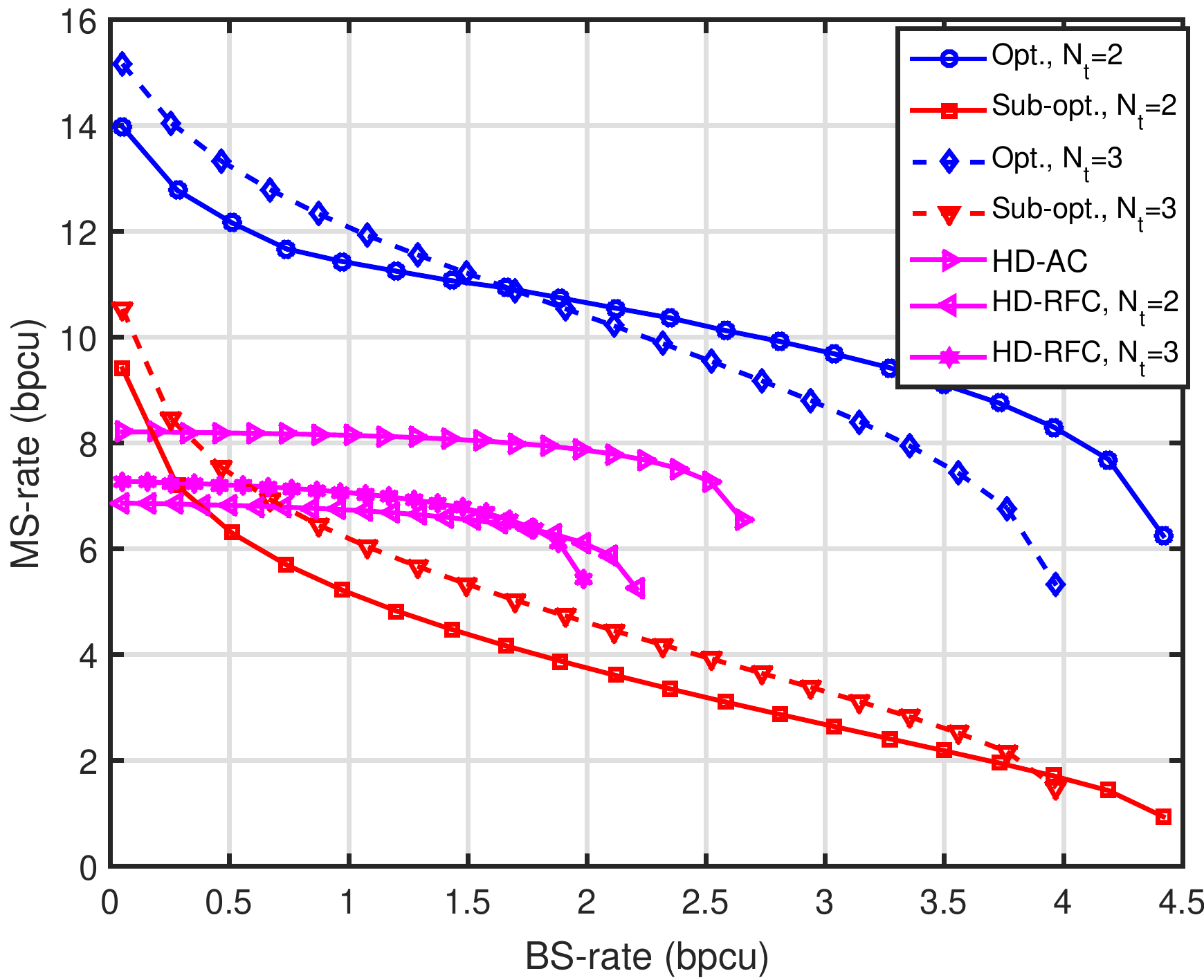}
	\vspace*{-0.3cm}
 \caption{Comparison of rate regions with $P=0$ dBm,  $\sigma^2_{h_b}=\sigma^2_{h_m}=30$ dBm, and  $N_t=2, 3$.}
\label{fig3:rate:region}
\end{figure}

\begin{figure}[tb!]
  \centering
	\includegraphics[width=0.9\columnwidth]{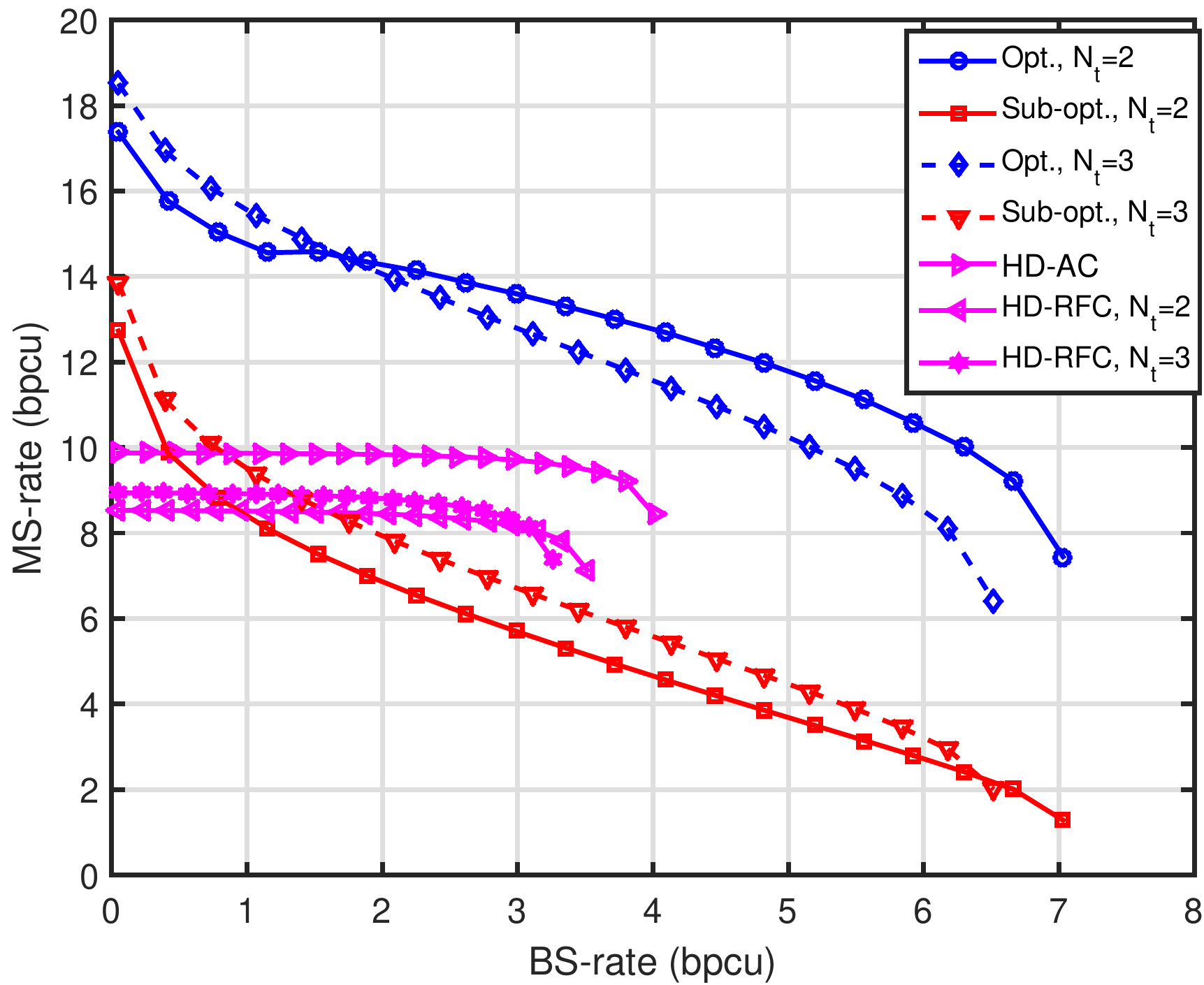}
\vspace*{-0.3cm}
 \caption{Comparison of rate regions $P=10$ dBm,  $\sigma^2_{h_b}=\sigma^2_{h_m}=30$ dBm, and  $N_t=2, 3$.}
\label{fig4:rate:region}
\end{figure}

\subsection{Partial CSI }
In this subsection, we first compare the analytical expressions of the BS outage probability and ergodic MS rate with the simulations. Considering that the BS and MS are often surrounded by multiple scatterers and the angle of arrival/departure undergo some spreading, the spatial covariance matrices $\qR_B$ and $\qR_M$ are modeled according to \cite{Mats} as 
\begin{eqnarray}
[\qR_{B/M}]_{m,n}=\frac{ {\rm e}^{{\rm j}\pi(m-n) \sin\theta_{B/M}}} {d^\tau} {\rm e}^{-\left( \pi(m-n)\sigma_{\theta}^{B/M}\cos\theta_{B/M} \right)^2/2}, \nonumber
\end{eqnarray}
where $[{\bf X} ]_{m,n}$ denotes the $(m,n)$-th element of the matrix ${\bf X}$,  $\theta_{B/M}$  is the central angle of the outgoing/incoming rays from/to the $N_t$ transmit/$(N-N_t)$ receive antennas of the BS and $\sigma_{\theta}^{B/M}$ is the standard deviation of the corresponding angular spread. The comparisons between analytical and simulation results are shown in Figs. \ref{fig5:partial CSI} and \ref{fig6:partial CSI}.  In Fig. \ref{fig5:partial CSI}, we take $\sigma_{\theta}^{M}=10^{\circ}$, $\theta_M=15^{\circ}$, $\gamma_B=10$ bpcu,  and vary $P$.  In Fig. \ref{fig6:partial CSI}, $\sigma_{\theta}^{B}=10^{\circ}$, $\theta_B=5^{\circ}$, and $\sigma^2_{h_b}=30$ dBm are taken, and $\sigma^2_{h_m}$ is varied. A randomly selected unit beamformer is used  in both figures and $\alpha=0.1$ is chosen{\footnote{Note that the analytical and numerical results exhibit very good matching for any other beamformer and $\alpha$. For brevity, we show only a specific result.}}.
\begin{figure}[tb!]
  \centering
	\includegraphics[width=0.9\columnwidth]{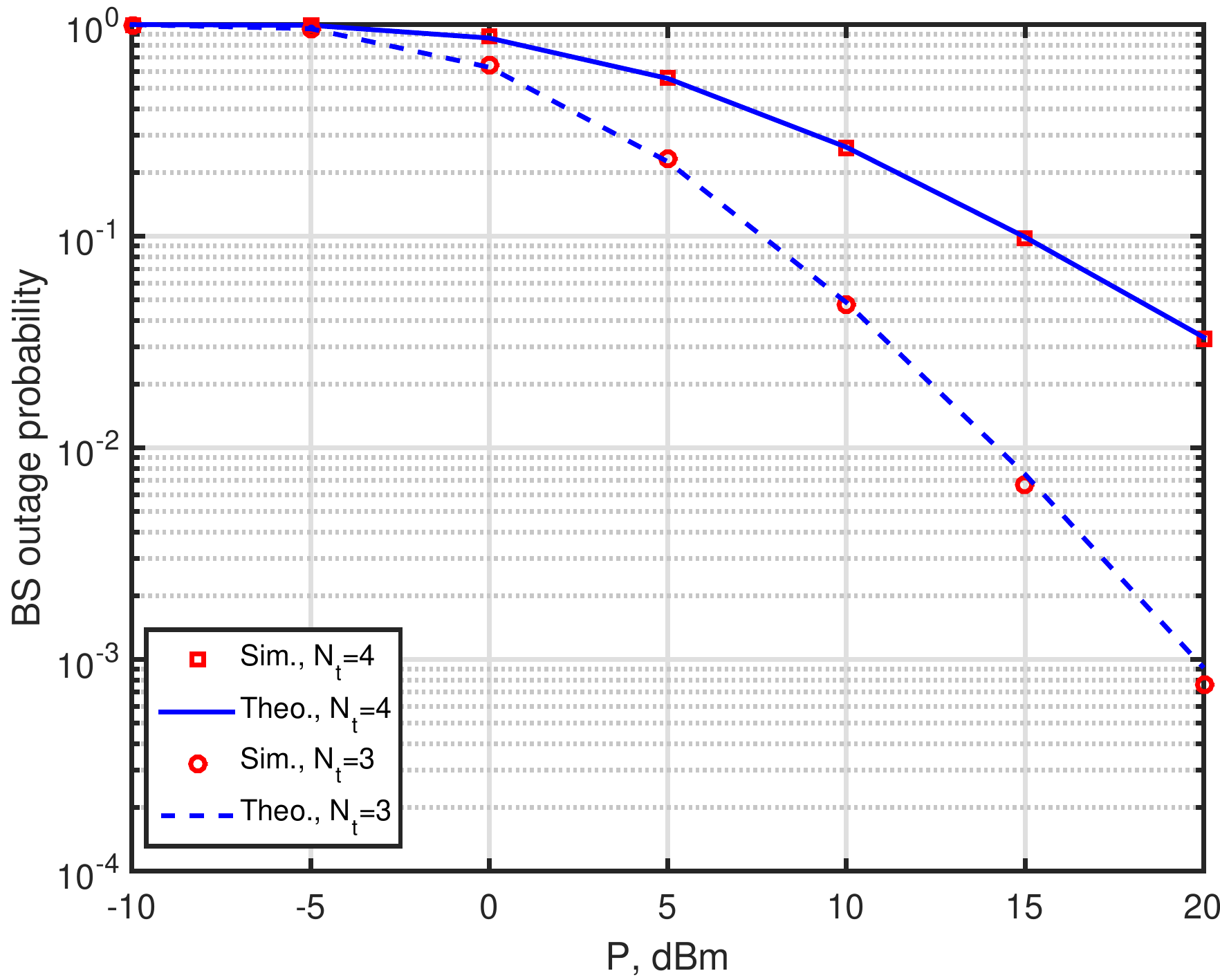}
	\vspace*{-0.3cm}
 \caption{Comparison of analytical and simulated BS outage probability.}
\label{fig5:partial CSI}
\end{figure}
It can be observed from Fig. \ref{fig5:partial CSI} that there is a good matching between the simulated and the analytical outage probability of the BS. Similarly, Fig. \ref{fig6:partial CSI} shows that the simulated and theoretical results of  the ergodic information rates at the MS exhibit also very good matching. Thus, these results verify the accuracy of the derived analytical expressions of outage probability and ergodic rate. 

\begin{figure}[tb!]
  \centering
	 \includegraphics[width=0.9\columnwidth]{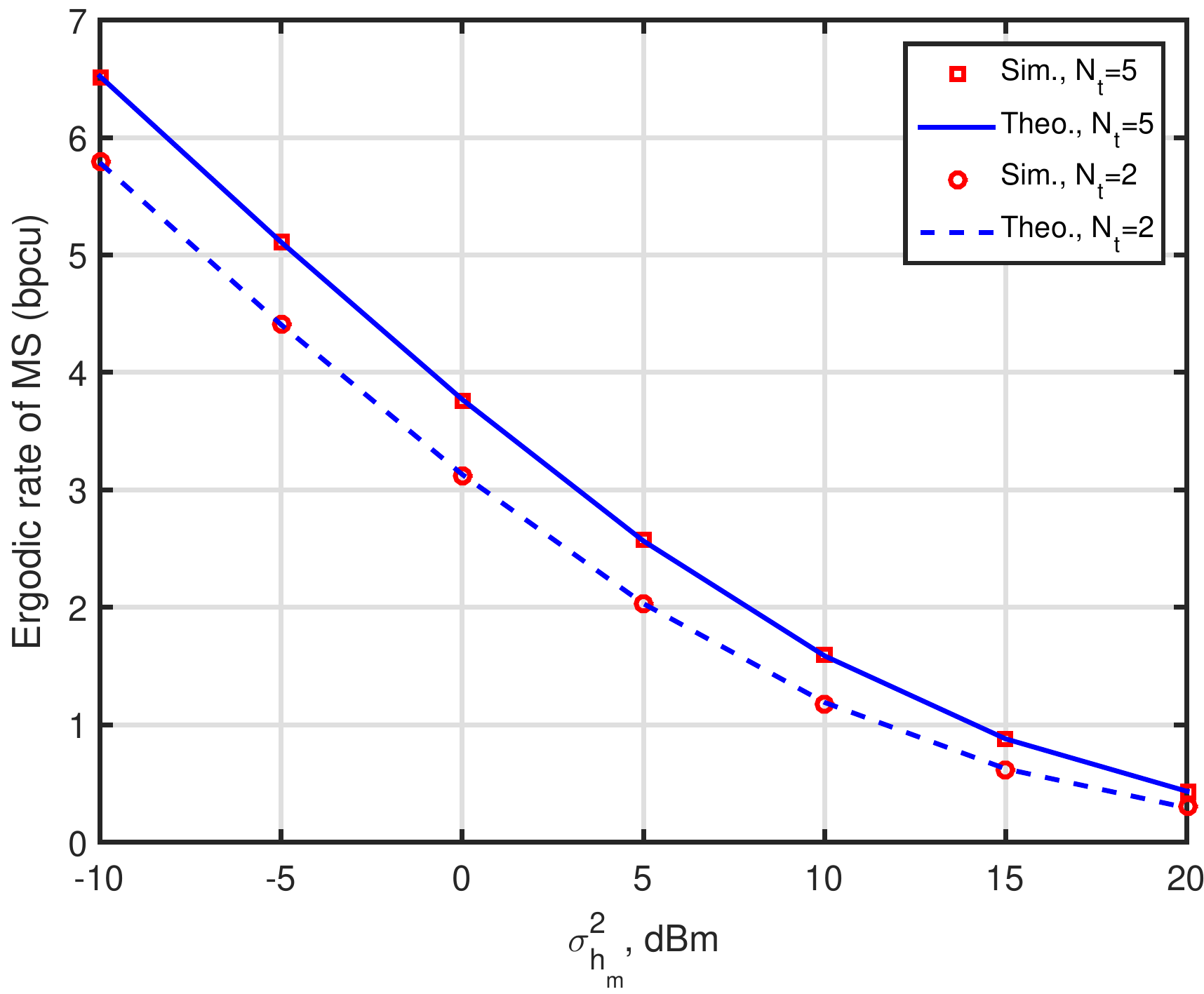}
	 \vspace*{-0.3cm}
 \caption{Ergodic MS rate versus variance of LI channel at the MS.}
\label{fig6:partial CSI}
\end{figure}

\begin{figure}[tb!]
  \centering
	 \includegraphics[width=0.9\columnwidth]{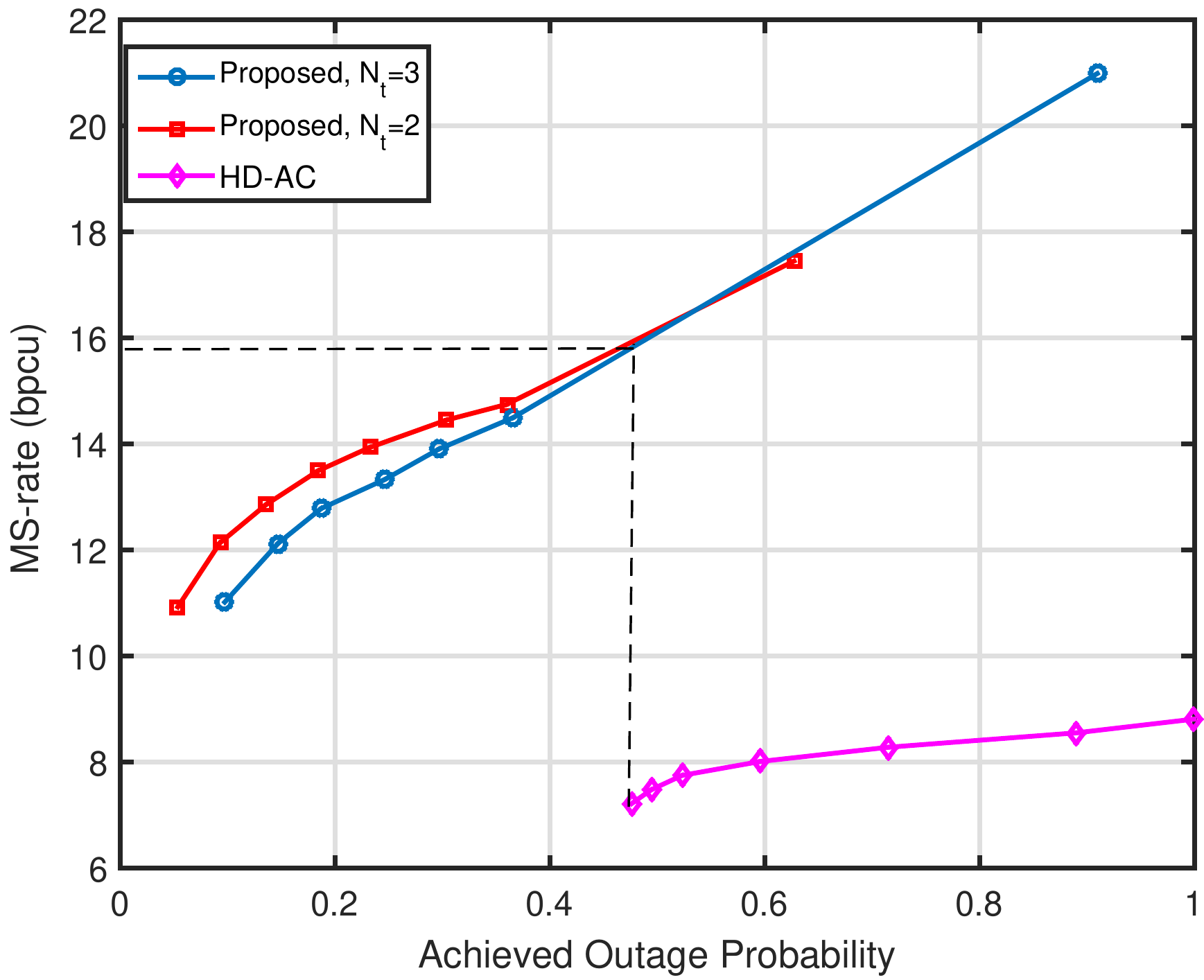}
	 \vspace*{-0.3cm}
 \caption{Ergodic MS rate versus BS outage probability for $N_t=2, N_t=3$, $N=6$, $P=10$ dBm, $\sigma^2_{h_b}=\sigma^2_{h_m}=30$ dBm, $\gamma_B=3$ bpcu.}
\label{fig7:partial CSI}
\end{figure}
The ergodic MS-rate versus BS outage probability is depicted in Figs. \ref{fig7:partial CSI} and  Figs. \ref{fig8:partial CSI} for different values of $N_t$. In both figures, we take $\sigma_{\theta}^{B}=\sigma_{\theta}^{M}=10^{\circ}$, $\theta_B=5^{\circ}$, $\theta_M=15^{\circ}$. In Fig. \ref{fig7:partial CSI}, $P=10$ dBm and $\gamma_B=3$ bpcu are taken, whereas in  Fig. \ref{fig8:partial CSI}, $P=0$ dBm and $\gamma_B=1$ bpcu are taken. The performance of the proposed FD scheme is compared with that of the HD-AC scheme in these figures{\footnote{ For conciseness, the performance of the HD-RFC scheme is skipped, since it is inferior to the performance of the HD-AC scheme.}}.

It can be observed from these figures that the MS-rate has to be sacrificed for achieving lower outage probability at the BS. Moreover, from the results of Figs. \ref{fig7:partial CSI} and  \ref{fig8:partial CSI}, we observe that the MS rate drops whereas the outage probability improves when $N_t$ decreases (or $N-N_t$ increases). This is due to the fact that smaller $N_t$ decreases beamforming gain of the BS towards the MS, whereas the resulting larger value of $N-N_t$ increases the LI suppression capability of the BS. Both figures also demonstrate that the proposed FD scheme significantly outperforms the benchmark HD-AC scheme, despite the fact that the former scheme requires only half of the RF chains than in the latter. 

For example, in Fig. \ref{fig7:partial CSI}, the achieved minimum outage probability with the  HD-AC scheme is only about 0.47, whereas that achieved  with the proposed method (with $N_t=2$) is approx. $0.05$ (improvement by a factor of 10).  At the outage probability of 0.47,  the HD-AC method achieves the MS-rate of only 7.2 bpcu, whereas the corresponding rate in the proposed scheme is about 10.9 bpcu. Similarly, in Fig. \ref{fig8:partial CSI},  the proposed scheme achieves the minimum outage probability of 0.017 (with $N_t=2$), whereas the HD-AC scheme achieves only 0.06. At this outage probability, the rate of the HD is 4.28 bpcu, whereas the corresponding rate of the proposed method is 7.05 bpcu. 

\begin{figure}[tb!]
  \centering
	\includegraphics[width=0.9\columnwidth]{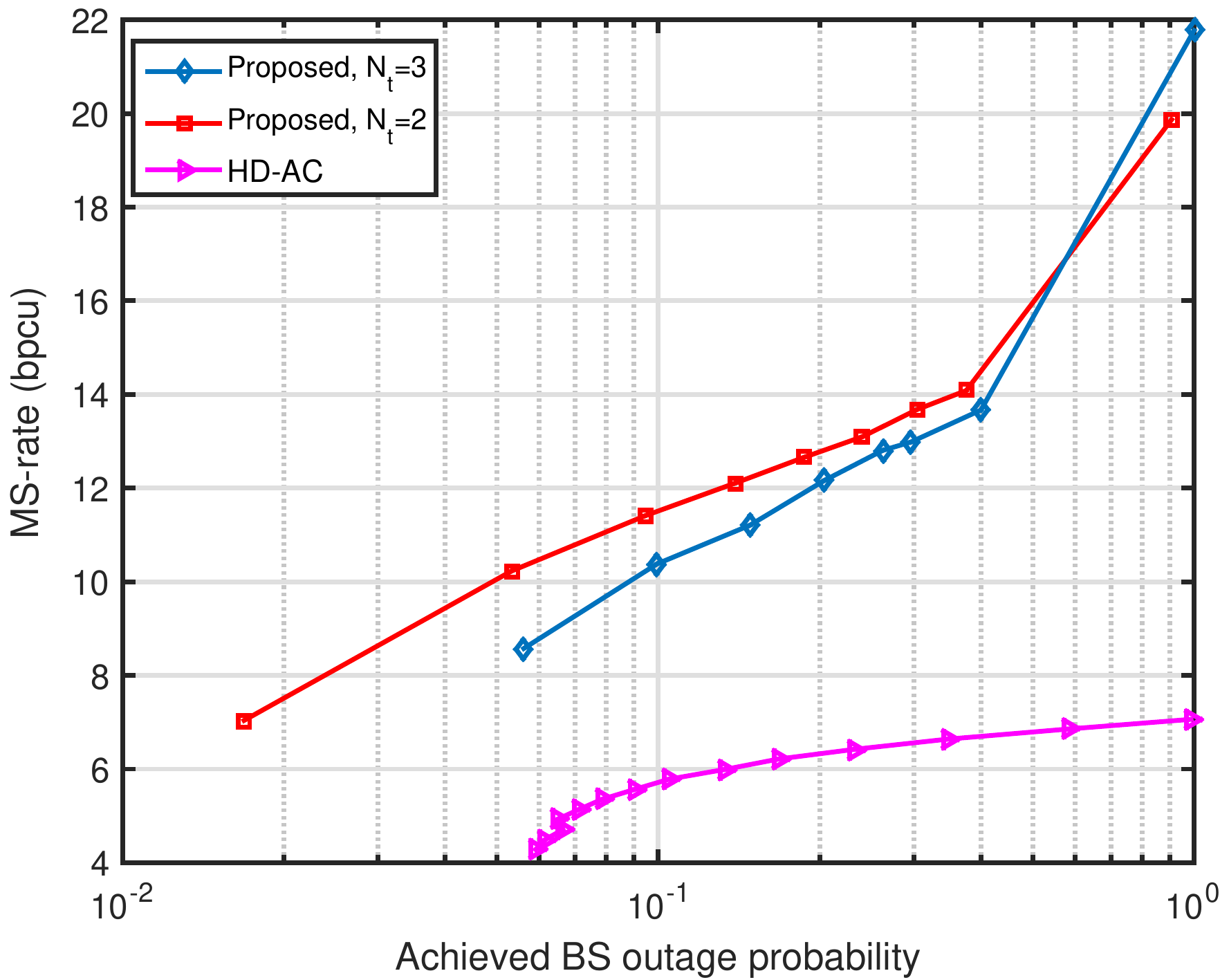}
	\vspace*{-0.3cm}
 \caption{Ergodic MS rate versus BS outage probability for $N_t=2, 3$, $N=6$, $P=0$ dBm, $\sigma^2_{h_b}=\sigma^2_{h_m}=30$ dBm, $\gamma_B=1$ bpcu.}
\label{fig8:partial CSI}
\end{figure}

\section{Conclusions}
\label{sec:Section5}
In this paper, the joint optimization of transmit beamforming and TS parameter was considered for a wireless-powered  bidirectional FD communication system. When instantaneous CSI is available, the boundary of  the MS-BS information rate was obtained by efficiently solving the optimization problem as an SDR problem,  in which the optimality of the relaxation was analytically confirmed. A sub-optimum approach based on zero-forcing constraint was also proposed,  where a closed-form expression for TS parameter was also determined. When the BS and MS have only second-order statistics of their transmit CSI, the joint optimization  was formulated as a problem of maximizing the ergodic MS rate, while satisfying the constraint on the BS outage probability. Utilizing the monotonicity property of the ergodic MS-rate and an upper bound of the outage probability, an SDR-based optimization problem was formulated and efficiently solved. Simulations demonstrate that significant performance gains are achievable over the half-duplex scheme when the beamformer and the TS parameter are jointly optimized.

\section*{Appendix A}
\subsection*{\centering{Derivation of $R_B^{\rm max}$}}
\label{Appendix:ProofPropositionBS-maxR}
It is obvious that 
\bea
\label{eqn:Ap1eq1}
\|\qh_M\|^2  - \frac{|\qh_M^H\qH_B\qw_B |^2  }{\sigma_b^2+ \|\qH_B\qw_B\|^2} \leq \|\qh_M\|^2,
\eea
where the equality is achieved with the ZF constraint $\qh_M^H\qH_B\qw_B =0$. The maximum BS rate is then obtained as
 \bea
\label{eqn:Ap1eq2}
R_B^{\rm max}=\max_{0<\alpha<1} (1-\alpha)\log_2\left( 1+\frac{\alpha}{1-\alpha} b \frac{\|\qh_M\|^2}{\sigma_b^2}\right), 
\eea
where $b=\eta P {\bar \lambda}\left(\qH_{BM}\qH_{BM}^H\right)$. Denote ${\tilde b}=b\frac{\|\qh_M\|^2}{\sigma_b^2}$. Equating the first order derivative of $ R_B^{\rm max}$ w.r.t. $\alpha$, we obtain
 \bea
\label{eqn:Ap1eq3}
\frac{\partial R_B^{\rm max}}{\partial \alpha}=0\Longrightarrow \log\left( 1+\frac{\alpha}{1-\alpha}{\tilde b}\right)=\frac{{\tilde b}}{  1+\frac{\alpha}{1-\alpha}{\tilde b} }\frac{1}{1-\alpha},
\eea
which can be written in the form
 \bea
\label{eqn:Ap1eq4}
z\log(z)=z+{\tilde b}-1, ~{\rm where}~z=1+\frac{\alpha}{1-\alpha}{\tilde b}.
\eea
After some straightforward manipulations, we obtain
 \bea
\label{eqn:Ap1eq5}
\frac{z}{{\rm e}}\log\left(\frac{z}{\rm e}\right)=\frac{{\tilde b}-1}{{\rm e}}\Longrightarrow \log\left( \frac{z}{{\rm e}}\right) {\rm e}^{ \log\left( \frac{z}{{\rm e}}\right)}=\frac{{\tilde b}-1}{{\rm e}}.
\eea
According to the definition of Lambert-W function \cite{Knuth}, the solution of the equation $y=xe^{x}$ for a given $y$ is expressed as $x=W(y)$, where $W(\cdot)$ is the Lambert-W function. Thus, \eqref{eqn:Ap1eq5} is given by
 \bea
\label{eqn:Ap1eq6}
z={\rm e}^{ W\left( \frac{{\tilde b}-1}{{\rm e}}\right)+1}.
\eea
Substituting $z$ into (\ref{eqn:Ap1eq6}), the optimum $\alpha$ is
\bea
\label{eqn:Ap1eq7}
\alpha^{\rm Opt}=\frac{  {\rm e}^{ W\left( \frac{{\tilde b}-1}{{\rm e}}\right)+1} -1} {{\tilde b}+ {\rm e}^{ W\left( \frac{{\tilde b}-1}{{\rm e}}\right)+1} -1}.
\eea
Therefore, $R_B^{\rm max}$ is given by
\bea
\label{eqn:Ap1eq7}
R_B^{\rm max}=(1-\alpha^{\rm Opt})\log_2\left( 1+\frac{\alpha^{\rm Opt}}{1-\alpha^{\rm Opt}} \frac{b \|\qh_M\|^2}{\sigma_b^2}\right). 
\eea

\section*{Appendix B}
\subsection*{\centering{Proof of Proposition ~\ref{ProofPropositionRankOne}}}
\label{Appendix:ProofPropositionRankOne}
 The Lagrangian multiplier function for the optimization problem (\ref{eq:ProbForm4})  is
\bea
{\mathcal L  }( \qV_B, \qY, \lambda_1, \lambda_2)\hspace*{-0.3cm}&=\hspace*{-0.3cm}&  -{\rm tr}(\qV_B \qh_B \qh_B^H )-{\rm tr}(\qY\qV_B )\nonumber\\
\hspace*{-0.3cm}& \hspace*{-0.3cm}& +\lambda_1\left[ {\rm tr}\left (\qV_B \qH_B^H \left( \qh_M\qh_M^H-\Gamma_B\qI\right)\qH_B\right )\right]\nonumber\\
\hspace*{-0.3cm}& \hspace*{-0.3cm}& +\lambda_2 \left({\rm tr}(\qV_B ) -P\right)-\lambda_1\sigma_b^2\Gamma_B,
\eea
where $\qY\succeq 0$ is the dual-variable associated with the positive semidefinite constraint  $\qV_B\succeq 0$, and $\lambda_1\geq 0$ \& $\lambda_2\geq 0$ are the Lagrangian multiplier coefficients. 
Among all KKT conditions, some relevant conditions required for the proof are as follows.
\bea
\label{eq:KKT1}
\frac{d {\mathcal L  }  }{ d \qV_B } & = &  \biggl\{ -\qh_B \qh_B^H+\lambda_1\qH_B^H\left(\qh_M\qh_M^H-\Gamma_B\qI\right) \qH_B \biggr. \nonumber\\ \biggl.
                                                               & &  +\lambda_2\qI-{ \bf Y}\biggr\}={ \bf 0 }\\
  \label{eq:KKT2}                                                             
 {\rm tr}(\qY\qV_B )&= & 0\Longrightarrow \qY\qV_B={ \bf 0 }, ~\qV_B \succeq 0, \qY \succeq 0.
\eea
The  KKT condition  (\ref{eq:KKT2})  implies that the optimum $\qV_B$ must lie in the null-space of $ \qY$. This means that the rank of optimum $\qV_B$ is the nullity of  $ \qY$. Consequently, it is sufficient to show that the optimum $ \qY$ has a nullity of one. Note that $\Gamma_B\geq 0$ and
\bea
{ \bf Y}=-\qh_B \qh_B^H+\lambda_1 \left[ \qH_B^H \left( \qh_M\qh_M^H-\Gamma_B\qI\right)\qH_B\right]+\lambda_2\qI.
\eea
Define $\qZ\triangleq \lambda_1 \left[ \qH_B^H \left( \qh_M\qh_M^H-\Gamma_B\qI\right)\qH_B\right]+\lambda_2\qI$. We first claim that at the KKT optimality, $\qZ$ is a positive-definite (full- rank) matrix. This can be readily proved by the method of contradiction. Consider the cases where $\qZ$ has at least one non-positive eigenvalue. Then, from Weyl's inequalities for sum of eigenvalues of Hermitian matrices  \cite{Horn}, it is clear that  $\qZ-\qh_B \qh_B^H$ will have at least one negative eigenvalue. In other words, ${ \bf Y}$ turns to a indefinite matrix, which contradicts the fact that $\qY$ should be positive-semidefinite. Consequently, positive-semidefiniteness of $\qY$ can be confirmed only when $\qZ$ is positive-definite. Now, we can show that the nullity of ${ \bf Y}$ cannot be greater than 1 by contradiction. Assume that $\left\{ \qu_{y,q}, q=1,2\right\} \in {\mathcal Ns  }({ \bf Y}) $, where  ${\mathcal Ns  }({ \bf Y})$ denotes null-space of ${ \bf Y}$. Then,
\bea
{ \bf Y}\qu_{y,q} & =& \qZ\qu_{y,q} -\qh_B \qh_B^H\qu_{y,q}\nonumber\\
& & \Longrightarrow { \bf 0}=\qZ\qu_{y,q} -\qh_B \qh_B^H\qu_{y,q}\nonumber\\
& & \Longrightarrow \qu_{y,q}= \qZ^{ -1}\qh_B \qh_B^H\qu_{y,q}, \forall q, 
\eea
which shows that $\qu_{y,q}$ is an eigenvector of $\qZ^{ -1}\qh_B \qh_B^H$ corresponding to eigenvalue $1$. Since ${ \rm rank }(\qZ^{ -1}\qh_B \qh_B^H)=1$, it turns out that $q$ cannot take a value greater than $1$. This shows that the dimension of null space of $\qY$ is one, and therefore, the rank of $\qV_B$ is one. \hfill$\square$

\section*{Appendix C}
\subsection*{\centering{Proof of Proposition ~\ref{ProofPropositionOptalpha}}}
\label{Appendix:ProofPropositionOptalpha}
The equality constraint for the BS rate is expressed as
\bea
\label{eq:App2eq1}
\log\left(1+\frac{\alpha}{1-\alpha}b\gamma\right)=R_B\log(2)\left(\frac{\alpha}{1-\alpha}+1\right).
\eea
Define $y\triangleq 1+\frac{\alpha}{1-\alpha}b\gamma$. Then (\ref{eq:App2eq1}) can be expressed in terms of $y$ as
\bea
\label{eq:App2eq2}
y={\rm e}^{ \frac{R_B\log(2)}{b\gamma} y}{\rm e}^{ R_B\log(2)\left( 1-\frac{1}{b\gamma}\right)},
\eea
which after simple manipulation can be expressed as
\bea
\label{eq:App2eq3}
\left( -\frac{R_B\log(2)}{b\gamma}y \right) {\rm e}^{- \frac{R_B\log(2)}{b\gamma} y}\hspace*{-0.3cm}&=\hspace*{-0.3cm}& \left( -\frac{R_B\log(2)}{b\gamma} \right)\nonumber\\
\hspace*{-0.3cm}& \hspace*{-0.3cm}& \times {\rm e}^{ R_B\log(2)\left( 1-\frac{1}{b\gamma}\right)}.
\eea
Using the Lambert-W function  $W(y)$ (i.e., $y=x{\rm e}^{x} \rightarrow x=W(y)$), $y$ in \eqref{eq:App2eq3} is expressed as
\bea
\label{eq:App2eq4}
y=\frac{- b\gamma}{R_B\log(2)} W\left( -\frac{R_B\log(2)}{b\gamma} {\rm e}^{  R_B\log(2) \left( 1-\frac{1}{b\gamma}\right)} \right).
\eea
Note that $\frac{R_B\log(2)}{b\gamma} {\rm e}^{  R_B\log(2) \left( 1-\frac{1}{b\gamma}\right)}\leq \frac{1}{\rm e}$ is required to have a real value of $y$. If not, the equality constraint is not feasible for given $b$, $\gamma$, and $R_B$ where $R_B\leq R_B^{\rm max}$. 
Substituting $y$ in \eqref{eq:App2eq4}, we obtain
\bea
\label{eq:App2eq5}
\frac{\alpha}{1-\alpha}=\frac{- 1}{R_B\log(2)} W\left( -\frac{R_B\log(2)}{b\gamma} {\rm e}^{  R_B\log(2) \left( 1-\frac{1}{b\gamma}\right)} \right)-\frac{1}{b\gamma},\nonumber
\eea
which yields the optimum ${\bar \alpha}^{\rm Opt}$ given in (\ref{eq:ProbForm13}). \hfill$\square$

\section*{Appendix D}
\subsection*{\centering{Proof of Proposition~\ref{ProofPropositionExact_RateB}}}
\label{Appendix:ProofPropositionExact_RateB}
Let ${\bar z}=\qh_M^H\left(\sigma_b^2\qI+\qH_B\qw_B\qw_B^H\qH_B\right)^{-1}\qh_M$. Since $\qh_M=\qR_M^{\frac{1}{2}}\qh_{M,w}$, where we consider that the effect of the distance dependent attenuation is included in $\qR_M$.
\bea
\label{eq:AppNewProof1}
{\bar z}=\qh_{M,w}^H\underbrace{\qR_M^{\frac{1}{2}}\left(\sigma_b^2\qI+\qH_B\qw_B\qw_B^H\qH_B\right)^{-1}\qR_M^{\frac{1}{2}}}_{{\boldsymbol \Phi}}\qh_{M,w}.
\eea
Let the eigenvalue decomposition of ${\boldsymbol \Phi}$ be given by ${\boldsymbol \Phi}={\bf U}{\boldsymbol \Lambda} {\bf U}^H$, where ${\boldsymbol \Lambda}$ is the diagonal matrix of eigenvalues  of ${\boldsymbol \Phi}$, whereas ${\bf U}$ is the matrix of eigenvectors. Let $\{ \lambda_i \}_{i=1}^{L}$ be the non-zero eigenvalues of  ${\boldsymbol \Phi}$, where its rank is $L$. Then, (\ref{eq:AppNewProof1}) is further expressed as
\bea
\label{eq:AppNewProof2}
{\bar z}&=&\qh_{M,w}^H {\bf U}{\boldsymbol \Lambda} {\bf U}^H\qh_{M,w}={\tilde {\bf h}}_{M,w}^H {\boldsymbol \Lambda} {\tilde {\bf h}}_{M,w}\nonumber\\
&=&\sum_{i=1}^{L} \lambda_i |{\tilde h}_{M,w,i}|^2,
\eea
where ${\tilde {\bf h}}_{M,w}={\bf U}^H\qh_{M,w}$ and ${\tilde h}_{M,w,i}$ is the $i$th element of  ${\tilde {\bf h}}_{M,w}$. Since ${\bf U}$ is a unitary matrix, the elements of  ${\tilde {\bf h}}_{M,w}$ remain ZMCSCG as the elements of  ${\bf h}_{M,w}$. The outage probability at the BS is given by
\bea
\label{eq:AppNewProof3}
P_{out, B}= {\rm Pr}\left\{ (1-\alpha) \log_2( 1+p_m{\bar z})\leq \gamma_B  \right\},
\eea
which after applying (\ref{eq:AppNewProof2}) gives
\bea
\label{eq:AppNewProof4}
P_{out, B} &= & {\rm Pr}\left\{  \sum_{i=1}^{L}  \lambda_i |{\tilde h}_{M,w,i}|^2 \leq \frac{1}{p_m}\left[ 2^{\frac{\gamma_B}{1-\alpha}}-1\right] \right\}\nonumber\\
&=& {\rm Pr}\left\{  \sum_{i=1}^{L}  \lambda_i |{\tilde h}_{M,w,i}|^2 \leq {\bar \gamma} \right\}.
\eea
Since ${\tilde h}_{M,w,i}$ is ZMCSCG with unit variance, $|{\tilde h}_{M,w,i}|^2 $ is exponentially distributed with unit parameter. Then, the RV $X=\sum_{i=1}^{L}  \lambda_i |{\tilde h}_{M,w,i}|^2 $ is a weighted sum of independent exponentially distributed random variables. The PDF of $X$ is given by \cite[p.11]{Cox}
\begin{eqnarray}
\label{eq:AppNewProof5}
f_X(x)=\sum_{i=1}^{L}a_i{\rm e}^{\displaystyle{-\frac{x}{\lambda_i}}},
\end{eqnarray}
where $x\geq 0$, and for $L> 1$ 
\begin{eqnarray}
\label{eq:AppNewProof6}
a_i=\frac{ \lambda_i^{L-2}}{\prod_{j=1,j\neq i}^{L} \lambda_i-\lambda_j }.
\end{eqnarray}
For $L=1$,  $a_i$ takes the values of $\frac{1}{\lambda_i}$. Substituting the PDF of $X$ into (\ref{eq:AppNewProof4}), we get
\begin{eqnarray}
\label{eq:AppNewProof7}
P_{out, B} =\int_{0}^{\bar \gamma} f_X(x) \,dx & = & \sum_{i=1}^{L} a_i\int_{0}^{\bar \gamma} {\rm e}^{-\frac{x}{\lambda_i}}\, dx\nonumber\\
&=& \sum_{i=1}^{L} a_i\lambda_i\left[1 - {\rm e}^{-\frac{{\bar \gamma}}{\lambda_i}}\right]
\end{eqnarray}
which completes the proof. \hfill$\square$


\begin{thebibliography}{99}

\bibitem{Sabharwal-JSAC14} 
A. Sabharwal, P. Schniter, D. Guo, D. W. Bliss, S. Rangarajan and R. Wichman, ``In-band full-duplex wireless: Challenges and opportunities,'' \emph{IEEE J. Sel. Areas Commun.}, vol. 32, pp. 1637-1652, Sept. 2014.

\bibitem{Hanzo} Z. Zhang, K. Long, A. V. Vasilakos and L. Hanzo, ``Full-duplex wireless communications: Challenges, solutions, and future research directions,'' \emph{Proc. IEEE}, vol. 104, pp. 1369-1409, July 2016.

\bibitem{Choi-MOBICOM10}
J. I. Choi, M. Jain, K. Srinivasan, P. Levis, and S. Katti, ``Achieving single channel, full duplex wireless communication,'' in \emph{Proc. 16th Annual Intl. Conf. Mobile Computing and Networking (Mobicom 2010)}, Chicago, IL, Sep. 2010, pp. 1-12.

\bibitem{Riihonen-TWC11} 
T. Riihonen, S. Werner, and R. Wichman, ``Hybrid full-duplex/half-duplex relaying with transmit power adaptation,'' \emph{IEEE Trans. Wireless Commun.}, vol. 10, pp. 3074-3085, Sep. 2011.

\bibitem{Riihonen-TSP11}
T. Riihonen, S. Werner, and R. Wichman, ``Mitigation of loopback self-interference in full-duplex MIMO relays,'' \emph{IEEE Trans. Signal Process.}, vol. 59, pp. 5983-5993, Dec. 2011.

\bibitem{Duarte-THESIS}
M. Duarte, C. Dick, and A. Sabharwal, ``Experiment-driven characterization of full-duplex wireless systems,'' \emph{IEEE
Trans. Wireless Commun.}, vol. 11, pp. 4296-4307, Dec. 2012.


\bibitem{Korpi_AntP16}
D. Korpi, M. Heino, C. Icheln, K. Haneda and M. Valkama, ``Compact inband full-duplex relays with beyond 100 dB self-interference suppression: Enabling techniques and field measurements,'' \emph{IEEE Trans. Antennas Propag.}, vol. 65, no. 2,  pp. 960-965, Feb. 2017.

\bibitem{Ngo-JSAC14}
H. Q. Ngo, H. A. Suraweera, M. Matthaiou and E. G. Larsson, ``Multipair full-duplex relaying with massive arrays and linear processing,'' \emph{IEEE J. Sel. Areas Commun.}, vol. 32, pp. 1721-1737, Sept. 2014.


\bibitem{Senaratne-ICC11}
D. Senaratne and C. Tellambura, ``Beamforming for space division duplexing,'' in \emph{Proc. IEEE Intl. Conf. Commun. (ICC 2011)}, Kyoto, Japan, June 2011, pp. 1-5.

\bibitem{Day-TSP12}
B.  P.  Day,  A.  R.  Margetts,  D.  W. Bliss, and P. Schniter, ``Full-duplex bidirectional MIMO: Achievable rates  under limited dynamic range,'' \emph{IEEE Trans. Signal Process.}, vol. 60, pp. 3702-3713, July 2012.

\bibitem{Ju-TWC11}
H. Ju, X. Shang, H. V. Poor and D. Hong, ``Bi-directional use of spatial resources and effects of spatial correlation,'' \emph{IEEE Trans. Wireless Commun.}, vol. 10, pp. 3368-3379, Oct. 2011.

\bibitem{Cirik_TSP14}
A. C. Cirik, Y. Rong and Y. Hua, ``Achievable rates of full-duplex MIMO radios in fast fading channels with imperfect channel estimation,'' \emph{IEEE Trans. Signal Process.}, vol. 62, pp. 3874-3886, Aug. 2014.

\bibitem{Lu-COM_SURV15}
X. Lu, P. Wang, D. Niyato, D. I. Kim and Z. Han, ``Wireless networks with RF energy harvesting: A contemporary survey,'' \emph{IEEE Commun. Surv. Tut.}, vol. 17, pp. 757-789, Second quarter 2015.


\bibitem{Bi-COMMAG16}
S. Bi, Y. Zeng, and R. Zhang, ``Wireless powered communication networks: An overview,'' \emph{Wireless Commun.}, vol. 23, pp. 10-18, Apr. 2016.

\bibitem{HUANG-TWC14}
K. Huang and V. K. N. Lau, ``Enabling wireless power transfer in cellular networks: Architecture, modeling  and deployment,'' \emph{IEEE Trans. Wireless Commun.}, vol. 13, pp. 6499-6514, Feb. 2014.  


\bibitem{Zhang-TWC13}
R. Zhang and C. Ho, ``MIMO broadcasting for simultaneous wireless information and power transfer,'' \emph{IEEE Trans. Wireless Commun.}, vol. 12, pp. 1989-2001, May 2013.


\bibitem{Huang-TVT16}
W. Huang, H. Chen, Y. Li and B. Vucetic, ``On the performance of multi-antenna wireless-powered communications with energy beamforming,'' \emph{IEEE Trans. Veh. Technol.}, vol. 65, pp. 1801-1808, Mar. 2016.


\bibitem{Ju-TCOM14}
H. Ju and R. Zhang, ``Optimal resource allocation in full-duplex wireless powered communication network,'' \emph{IEEE Trans. Commun.}, vol. 62, pp. 3528-3540, Oct. 2014.


\bibitem{Yamazaki-WCNC15} 
K. Yamazaki, Y. Sugiyama, Y. Kawahara, S. Saruwatari and T. Watanabe, ``Preliminary evaluation of simultaneous data and power transmission in the same frequency channel,'' in \emph{Proc. IEEE WCNC 2015}, New Orleans, LA, Mar. 2015, pp. 1-6.

\bibitem{Gao-WCNC15}
M. Gao, H. H. Chen, Y. Li, M. Shirvanimoghaddam and J. Shi, ``Full-duplex wireless-powered communication with antenna pair selection,'' in \emph{Proc. IEEE WCNC 2015}, New Orleans, LA, 2015, pp. 693-698.


\bibitem{Hu-ARXIV15}
Z. Hu, C. Yuan, F. Zhu, and F. Gao, ``Weighted sum transmit power minimization for full-duplex system with SWIPT and self-energy recycling,'' \emph{IEEE Access}, vol. 4, pp. 4874-4881, 2016.

\bibitem{Ding-COMM15}
Z. Ding, C. Zhong, D. W. K. Ng, M. Peng, H. A. Suraweera, R. Schober and H. V. Poor, ``Application of smart antenna technologies in simultaneous wireless information and power transfer,'' \emph{IEEE Commun. Mag.}, vol. 53, pp. 86-93, Apr. 2015.

\bibitem{ChaliseSPAWC2016} B. K. Chalise, H. A. Suraweera, and G. Zheng, ``Throughput maximization for full-duplex energy harvesting MIMO communications,'' in \emph{ Proc. IEEE SPAWC 2016}, Edinburgh, UK, June 2016,  pp. 1-5. 


\bibitem{Masmoudi} A. Masmoudi, and T. Le-Ngoc, ``Residual self-interference after cancellation in full-duplex systems,'' in {\it Proc. IEEE Intl. Conf. Commun.}, (ICC 2014), Sydney, Australia, June 2014, pp. 4680-4685.

\bibitem{MohammadiTC2016} M. Mohammadi, B. K. Chalise, H. A. Suraweera, C. Zhong, G. Zheng, and I. Kirkidis, ``Throughput analysis and optimization of wireless-powered multiple antenna full-duplex relay systems,'' {\it IEEE Trans. Commun.}, vol. 64, pp.  
1769 - 1785, Apr. 2016. 

\bibitem{BrunoRui2017} Y. Zeng, B. Clerckx, and R. Zhang,  ``Communications and signals design for wireless power transmission,'' {\it IEEE Trans. Commun.}, vol. PP, no. 99, pp. 1-1. [Online]. Available: https://arxiv.org/pdf/1611.06822.pdf.

\bibitem{Horn} R. A. Horn and C. A. Johnson, {\it Matrix Analysis},  2nd ed., New York, NY: Cambridge University Press, 2013.

\bibitem{Hager} W. W. Hager, ``Updating the inverse of a matrix,'' {\it SIAM Review}, vol. 31, pp. 221-239, June 1989.


 \bibitem{Boyd} S.~Boyd and L. Vandenberghe, {\it Convex Optimization}, Cambridge University Press, 2004.
 
\bibitem{LuoMa} Z. Q. Luo, W. K. Ma, A. M. C. So, Y. Ye, and S. Zhang, ``Semidefinite relaxation of quadratic optimization problems,'' \emph{ IEEE Signal Process. Mag.}, vol. 27,  no. 3, pp. 20-34, May 2010. 

\bibitem{FengZhu} F. Zhu, F. Gao,  T. Zhang, K. Sun, and M. Yao, ``Physical-layer security for full duplex communications with self-interference mitigation,'' \emph{ IEEE Trans. Wireless Commun.}, vol. 15,  no. 1, pp. 329-340, Jan. 2016.

\bibitem{Suraweera-TWC14}
H. A. Suraweera, I. Krikidis, G. Zheng, C. Yuen and P. J. Smith, ``Low-complexity end-to-end performance optimization in MIMO full-duplex relay systems,'' \emph{IEEE Trans. Wireless Commun.}, vol. 13, pp. 913-927, Feb. 2014.
 
 \bibitem{Knuth} R. M. Corless, G. H. Gonnet, D. E. G. Hare,  D. J. Jeffrey,  and D. E. Knuth, ``On the Lambert W function,'' {\it Advances in Computational Mathematics}, vol. 5, pp.  329-359, 1996. 
 
 \bibitem{Barghi} S. Barghi, A. Khojastepour, K. Sundaresan, and S. Rangarajan, ``Characterizing the throughput gain of single cell MIMO wireless systems with full duplex radios,''  in Proc. \emph{10th Int. Symp. Modeling and Optimization in Mobile, Ad Hoc and Wireless Networks (WiOpt)},  Padeborn, Germany, May 2012. 
 
 
 \bibitem{SZhou} S. Zhou and G. B. Giannakis, ``Optimal transmitter eigen-beamforming and space-time block coding based on channel correlations,'' {\it IEEE Trans. Inf. Theory}, vol. 49, pp. 1673-1690, July 2002.
 
\bibitem{ChaliseSuraweera2013} B. K. Chalise, W.-K., Ma, Y. D. Zhang, H. A. Suraweera, and M. G. Amin, ``Optimum performance boundaries of OSTBC based AF-MIMO relay system with energy harvesting receiver,'' {\it IEEE Trans. Signal Process.}, vol. 61, no. 17. pp. 4199-4213, Sept. 2013.
 
\bibitem{Ryzhik} I.~S.~Gradshteyn and I. M. Ryzhik, {\it Table of Integrals, Series, and Products}. 7th ed., San Diego, CA: Academic Press, 2007.
 

\bibitem{Stegun} M. Abramowitz and I. A. Stegun, {\it Handbook of Mathematical Functions With Formulas, Graphs, and Mathematical Tables}. 9th ed., New York: Dover, 1970.  

 \bibitem{Simon} M. K. Simon and M.-S. Alouini, {\it Digital communications over fading channels}. 2nd ed., New York, NY: John Wiley \& Sons, 2005.
 
 \bibitem{BatuLuc} B. K. Chalise and L. Vandendorpe. ``MIMO relay design for multipoint-to-multipoint communications with imperfect channel state information,'' \emph{IEEE Trans. Signal Process.}, vol. 57, pp. 2785-2796, July 2009.
 
 \bibitem{Mats} M.~Bengtsson and B.~Ottersten, ``Optimum and sub-optimum transmit beamforming,'' {\it Handbook of Antennas in Wireless Communications}, Boca Raton, FL: CRC Press, 2001.

\bibitem{DineshKatti} D. Bharadia, E. McMilin, and S. Katti, ``Full duplex radios'', \emph{Proc. ACM SIGCOMM 2013},  Hong Kong, China, Aug. 2013, pp. 375-386. 


\bibitem{Cox} D. R. Cox, {\it Renewal Theory},  London: Methuen's Monographs on Applied Probability and Statistics, 1967.





\end{thebibliography}
\end{document}